\documentclass[copyright,creativecommons]{eptcs}
\usepackage{breakurl}             
\usepackage{underscore}           
\usepackage{amsfonts}
\usepackage{amsmath}
\usepackage{xspace}
\usepackage{comment}
\usepackage{changebar}
\usepackage{amssymb,amsthm}
\usepackage{tikz}
\usetikzlibrary{arrows,backgrounds,automata,shapes,patterns}

\newcommand{\adom}{\textit{adom}}

\renewcommand{\phi}{\varphi}

\newtheorem{definition}{Definition}
\newtheorem{theorem}{Theorem}
\newtheorem{lemma}[theorem]{Lemma}
\newtheorem{proposition}[theorem]{Proposition}

\newtheorem{question}[theorem]{Question}

\newcommand{\U}{\mathcal U}
\newcommand{\D}{\mathcal D}
\newcommand{\N}{\mathcal N}
\renewcommand{\L}{\mathcal L}
\newcommand{\F}{\mathcal F}

\newcommand{\myi}{{(i)}\xspace}
\newcommand{\myii}{{(ii)}\xspace}

\newcommand{\lit}{\text{\it lit}}

\newcommand{\argmin}{\operatornamewithlimits{argmin}}

\newtheorem{exampleAux}{Example} 
\newenvironment{example}
				{\begin{exampleAux}\normalfont}
				{\phantom{x} \hfill
					$\blacksquare$\end{exampleAux}}

\title{Social Choice Methods for Database
  Aggregation}
\author{Francesco Belardinelli
\institute{Department of Computing, Imperial College London, U.K.}
\institute{Laboratoire IBISC, Universit\'e d'Evry, France}
\email{francesco.belardinelli@imperial.ac.uk}
\and
 Umberto Grandi
\institute{IRIT, University of Toulouse\\France}
\email{umberto.grandi@irit.fr}
}

\begin{document}

\maketitle



\begin{abstract}
Knowledge can be represented compactly in multiple ways, from a set of
propositional formulas, to a Kripke model, to a database. In this
paper we study the aggregation of information coming from multiple
sources, each source submitting a database modelled as a first-order
relational structure. In the presence of integrity constraints, we
identify classes of aggregators that respect them in the aggregated
database, provided these are satisfied in all individual databases. We
also characterise languages for first-order queries on which the
answer to a query on the aggregated database coincides with the
aggregation of the answers to the query obtained on each individual
database.  This contribution is meant to be a first step on the
application of techniques from social choice theory to knowledge
representation in databases.
\end{abstract}

\section{Introduction}

Aggregating information coming from multiple sources is a
long-standing problem in both knowledge representation and
multi-agent systems (see, e.g., \cite{vanHarmelen2007}).  Depending on
the chosen representation for the incoming pieces of knowledge or
information, a number of competing approaches has seen the light in
these literatures.  Belief
merging \cite{LiberatoreSchaerf1998,KoniecznyPinoPerezJLC2002,KoniecznyLangMarquisAIJ2004}
studies the problem of aggregating propositional formulas coming from
different agents into a set of models, subject to integrity
constraints.  Judgment and binary
aggregation \cite{EndrissHBCOMSOC2016,DokowHolzmanJET2010,GrandiEndrissAIJ2013}
asks individual agents to report yes/no opinions on a set of
logically-related binary issues -- the agenda -- in order to take a
collective decision.  Social welfare functions, the cornerstone
problem in social choice theory (see, e.g., \cite{Arrow1963}), can
also be viewed as mechanisms to merge conflicting information, namely
the individual preferences of voters expressed in the form of linear
orders over a set of alternatives.  Other examples include graph
aggregation \cite{EndrissGrandiAIJ2017}, multi-agent
argumentation \cite{BoothEtAlKR2014,CaminadaPigozzi2011,ChenEndrissTARK2017},
ontology merging \cite{PorelloEndrissJLC2014}, and clustering
aggregation \cite{GionisEtAlTKDD2007}.

In this work we take a general perspective and represent individual
knowledge coming from multiple sources as a profile of databases,
modelled as finite relational
structures \cite{AbiteboulHV95,MaierUV84}.  Our motivation lies
inbetween two possibly conflicting views on the problem of information
fusion.  On the one hand, the study of information merging (typically
knowledge or beliefs) in knowledge representation has focused on the
design of rules that guarantee the consistency of the outcome, with
the main driving principles inspired from the literature on belief
revision\footnote{Albeit we acknowledge the work
of \cite{DoyleWellmanAIJ1991, MaynardZhangLehmannJAIR2003}, which
aggregate individual beliefs, modelled as plausibility orders, in an
"Arrovian" fashion.}.  On the other hand, social choice theory has
focused on agent-based properties, such as fairness and
representativity of an aggregation procedure, paying attention as well
on possible strategic behaviour by either the agents involved in the
process or an external influencing source.  While there already have
been several attempts at showing how specific merging or aggregation
frameworks could be simulated or subsumed by one another (see,
e.g., \cite{GrandiEndrissIJCAI2011,DietrichList2007a,GregoireKonieczny2006,EveraereEtAl2015}),
a more general perspective allows us to find a compromise between the
two views described above.

\textbf{Our Contribution.} Our starting point is a set of finite relational structures on the
same signature, coming from a group of agents or sources. Then, our
research problem is how to obtain a collective database summarising
the information received. Virtually all of the settings mentioned
above (beliefs, graphs, preferences, judgments, \ldots) can be
represented as databases, showing the generality of our approach.  We
propose a number of rules for database aggregation, some inspired by
existing ones from the literature on computational social choice and
belief merging, as well as a new one adapted from representations of
incomplete information in databases \cite{Libkin15}.  We privilege
computationally friendly aggregators, for which the time to determine
the collective outcome is polynomial in the
individual input received.

We first evaluate these rules axiomatically, using notions imported
from the literature on social choice, to provide a first
classification of the agent-based properties satisfied by our proposed
rules.  Then, when integrity constraints are present, we study how to
guarantee that a given aggregator ``lifts'' the integrity constraint
from the individual to the collective level, i.e., the aggregated
databases satisfy the same constraints as the individual ones.
Specifically, we investigate which rules lift
classical integrity constraints from database theory, such as
functional dependencies, referential integrity and value
constraints.  Finally, since databases are typically queried using
formulas in first-order logic, a natural question to ask in a
multi-agent setting is whether the aggregation of the individual
answers to a query coincides with the answer to the same query on the
aggregated database.
We provide a partial answer to this important problem, by identifying sufficient conditions on the
first-order query language.

\textbf{Related Work.} 
 While we are not aware of any application of methods from social
choice theory to database aggregation, possibly the closest approach
to ours is the work of Baral \emph{et
al.} \cite{BaralKM91,BaralEtAl1992} and
Konieczny \cite{Konieczny2000}.  In \cite{BaralKM91} the authors
formalize the notion of combining knowledge bases, which are
represented as normal Horn-logic programs.  These investigations were
further pursued in \cite{BaralEtAl1992}, which considers the problem
of merging information represented in the form of first-order
theories, taking a syntactic rather than a semantic approach (as we do
here), and focusing on finding maximally consistent sets of the union
of the individual theories received.  In doing so, however, the
authors privilege the knowledge representation approach, and have no
control on the set of agents supporting a given maximally consistent
set rather than another.  In \cite{Konieczny2000}, the author applies
techniques from belief merging to the equivalent problem of
aggregating knowledge bases of first-order formulas, proposing a
number of rules analysed axiomatically.
%
Both contributions stem from a long tradition on combining
inconsistent theories, especially in the domain of paraconsistent
logics \cite{BlairS89,Subrahmanian92}.  However, all these approaches
focus on merging syntactic representations (e.g., logic programs,
first-order theories), while here we operate on semantical instances,
i.e., databases. We also mention the work of Lin and Mendelzon \cite{AlbertoEtAl1996}, proposing an AGM-style
approach to merge first-order theories under constraints, reminescent of the distance-based rules
that we will consider in Section~\ref{aggregators}.

Mildly related to the present work is the literature on database
repairs. Here the focus is on principles of minimal change, which is
the aspiration to keep the recovered data as faithful as possible to
the original (inconsistent) database \cite{ArieliDB07}. Our perspective is different, as we analyse aggregation rather then repairing. Nonetheless, we also consider distance-based procedure.

More recently, connections between social choice theory and database
querying have been explored in \cite{KimelfeldEtAlIJCAI2018}, which
enriches the tasks currently supported in computational social choice
by means of relational databases, thus allowing for sophisticated
queries about voting rules, candidates, and voters. Here our aim is
symmetric, as we rather apply methods and techniques from
computational social choice to database theory.

An overview of the results presented hereafter can be found
in \cite{BelardinelliG19}, which introduces the question of database
aggregation and defines some aggregation procedures. Here we
extend \cite{BelardinelliG19} by considering in detail the problems
pertaining to collective rationality through lifting of integrity
constraints in Section~\ref{sec:collectiverationality}, as well as
aggregation and query answering in Section~\ref{sec:queries}.

\textbf{Structure of the Paper.}
In Section~\ref{sec:preliminaries} we present basic notions
on databases and integrity constraints. In Sections~\ref{aggregators}
and~\ref{sec:axioms} we introduce several database aggregation
procedures, and we analyse them by proposing a number of axiomatic
properties.  Sections~\ref{sec:collectiverationality} and
~\ref{sec:queries} contains our main results on the lifting of
integrity constraints and aggregated query answering.
Section~\ref{sec:conclusions} concludes the paper.

\section{Preliminaries on Databases}\label{sec:preliminaries}


In this section we introduce basic notions on databases that we will
use in the rest of the paper. In particular, we adopt a relational
perspective~\cite{AbiteboulHV95} and present databases as finite
relational structures over database schemas.
Hereafter we assume a countable domain $\U$ of elements $u, u',
\ldots$, for the interpretation of relation symbols.

\begin{definition}[Database Schema and Instance]
We call a {\em (relational) database schema} $\D$ a finite set $\{ P_1 /
q_1,\dots,P_m / q_m \}$ of relation symbols $P$ with arity $q \in
\mathbb{N}$.
%
\label{dbinstance}
Given database schema $\D$ and domain $\U$, a {\em $\D$-instance} over
$\U$ is a mapping $D$ associating each relation symbol $P \in \D$ with
a {\em finite} $q$-ary relation over $\U$, i.e., $D(P) \underset{{\tiny
f\!in}}{\subset} \U^{q}$.
\end{definition}

By Definition~\ref{dbinstance} a database instance is a finite (relational)
model of a database schema.  The {\em active domain} $\adom(D)$ of an
instance $D$ is the set of all individuals in $\U$ occurring in some
tuple $\vec{u}$ of some predicate interpretation $D(P)$,
that is, $\adom(D) = \bigcup_{P \in \D} \{ u \in \U \mid u = u_i \text{
  for some } \vec{u} \in D(P) \}$.  Observe that, since $\D$ contains
a finite number of relation symbols and each $D(P)$ is finite, so is
$\adom(D)$. We denote the set of all instances over $\D$ and $\U$ as
$\D(\U)$. Clearly, the formal framework for databases we adopt is quite
simple, but still it is powerful enough to cover practical cases of
interest \cite{MaierUV84}. Here we do not discuss the pros and cons of the
relational approach to database theory and refer to the literature for
further details \cite{AbiteboulHV95}.

\begin{example} \label{ex11}
To illustrate the notions introduced above, consider a database schema
$\D_F$ for a faculty $F$, registering data on students and staff in
two ternary relations $\mathit{Students}/3$ and $\mathit{Staff}/3$,
 that
 register IDs,
names, and departments of students and staff respectively.
A database instance $D_F$ of $\D_F$ can be given, for example, as follows:\\

{\footnotesize
\begin{tabular}{|l|l|l|}
\hline
\multicolumn{3}{|l|}{$\mathit{Students}$}\\
\hline
ID & Name & Department\\
\hline
\hline
10 & Steve & History\\
11 & Carole & Computer Science\\
12 & Derek & Mechanical Engineering\\
\hline
\end{tabular}
\hspace{1cm}
\begin{tabular}{|l|l|l|}
\hline
\multicolumn{3}{|l|}{$\mathit{Staff}$}\\
\hline
ID & Name & Department\\
\hline
\hline
01 & Rose & Mechanical Engineering\\
02 & Audrey & Mechanical Engineering\\
03 &  Karl & History\\
\hline
\end{tabular}\\
}
\end{example}

To specify the properties of databases, we make use of first-order
logic with equality and no function symbols.  Let $V$ be a countable
set of {\em individual variables}, which are the only terms in the
language for the time being.
\begin{definition}[FO-formulas over $\D$]\label{def:fo}
Given a database schema $\D$, the formulas $\varphi$ of the
first-order language $\L_{\D}$ are defined by the following BNF:
\begin{eqnarray*}
\varphi & ::= & x = x'\mid P(x_1, \ldots ,x_{q}) \mid \lnot \varphi
\mid \varphi \to \varphi \mid \forall x \varphi
\end{eqnarray*}
where $P \in \D$, $x_1, \ldots ,x_{q}$ is a $q$-tuple of variables and $x,
x' $ are variables.
\end{definition}
We assume ``$=$'' to be a special binary predicate with fixed obvious
interpretation. By Def.~\ref{def:fo}, $\L_\D$ is a first-order
language with equality over the relational vocabulary $\D$ and with no
function symbols.
%
In the following we use the standard abbreviations $\exists$, 
$\wedge$,
$\vee$, and $\neq$.
Also, free and bound variables are defined as standard.  For a formula
$\varphi \in \L_{\D}$,
we write $\varphi(x_1,\ldots,x_\ell)$, or simply $\varphi(\vec x)$, to
list  in arbitrary order all free variables
$x_1,\ldots,x_\ell$ of $\varphi$.
A {\em sentence} is a formula with no free variables.  Notice that the
only terms in our language $\L_{\D}$ are individual variables. We can
add constants for individuals with some minor technical changes to the
definitions and results in the paper. However, these do not impact on
the theoretical contribution and we prefer to keep notation lighter.

To interpret FO-formulas on database instances, we introduce 
{\em assignments} as functions $\sigma: V \mapsto \U$.
Given an assignment $\sigma$, we denote by $\sigma^x_u$ the 
assignment such that
\myi $\sigma^x_u(x) = u$; and \myii 
$\sigma^x_u(x') = \sigma(x')$, for every variable $x' \in V$ different
from $x$.
We can now define the semantics of $\L_\D$.
\begin{definition}[Satisfaction]\label{def:fo-sem}
Given a $\D$-instance $D$, an assignment $\sigma$, and an FO-formula
$\varphi\in\L_{\D}$, we inductively define whether $D$ \emph{satisfies
  $\varphi$ under $\sigma$}, or $ (D, \sigma) \models \varphi$, as
follows:
\begin{tabbing}
 $ (D, \sigma)\models P(x_1,\ldots,x_{q})$ \ \ \ \=
 iff \ \ \= $\langle \sigma(x_1),\ldots,\sigma(x_{q}) \rangle \in D(P)$\\ 
 $ (D, \sigma)\models x = x'$ \> iff \> $\sigma(x)=\sigma(x')$\\
 $ (D, \sigma)\models \lnot\varphi$ \> iff \> $(D, \sigma) \not \models\varphi$\\
 $ (D, \sigma)\models \varphi \to \psi$ \> iff \> $(D,\sigma) \not \models \varphi $ or $ (D, \sigma)\models\psi$\\
 $ (D, \sigma)\models \forall x\varphi$ \> iff \> for every
 $u\in \adom(D)$, $ (D, \sigma^x_u) \models\varphi$
\end{tabbing}

A formula $\varphi$ is {\em true} in $D$, written $D\models\varphi$,
iff $ (D, \sigma) \models \varphi$, for all assignments $\sigma$.
\end{definition}

\noindent
Observe that in Def.~\ref{def:fo-sem} we adopt an {\em active-domain}
semantics, that is, quantified variables range only over the active
domain of $D$.  This is standard in database
theory \cite{AbiteboulHV95}, where $\adom(D)$ is assumed to be the
``universe of discourse''.\\

\textbf{Integrity Constraints.}
It is well-known that several properties and constraints on databases
can be expressed as FO-sentences. Here we consider some of them for
illustrative purposes.

A {\em functional dependency} is an expression of type $\ell_1, \ldots,
\ell_k \mapsto \ell_{k+1}, \ldots, \ell_{q}$.  A database instance $D$
satisfies a functional dependency $\ell_1, \ldots, \ell_k \mapsto \ell_{k+1},
\ldots, \ell_{q}$ for predicate symbol $P$ with arity $q$ iff for every
$q$-tuples $\vec{u}$, $\vec{u}'$ in $D(P)$, whenever $u_i = u'_i$ for all $i \leq k$,
then we also have $u_i = u'_i$ for all $k < i \leq q$.
If $k = 1$, we say that it is a {\em key dependency}.
Clearly, any database instance $D$ satisfies a functional dependency
$\ell_1, \ldots, \ell_k \mapsto \ell_{k+1}, \ldots, \ell_{q}$ iff it satifies the
following FO-sentence:
\begin{eqnarray*}
 \forall \vec{x} , \vec{y} \left(P(\vec{x}) \land P(\vec{y}) \land
 \bigwedge_{i \leq k }(x_i = y_i) \to \bigwedge_{k < i \leq
   q}(x_i = y_i)\right)
\end{eqnarray*}

A {\em value constraint} is an expression of type $n_k \in P_v$, where $D(P_v)$ contains a list of admissible values.  A
database instance $D$ satisfies a value constraint $n_k \in P_v$ for predicate symbol $P$ with arity $q \geq k$ iff for every $q$-tuple $\vec{u}$ in $D(P)$, $u_k \in D(P_v)$.
Also for value constraints, it is easy to check that an instance $D$
satisfies constraint $n_k \in P_v$ for symbol $P$ iff it satisfies the following:
\begin{eqnarray*}
\forall x_1, \dots, x_q (P(x_1,\dots,x_q) \to P_v(x_k))
\end{eqnarray*}

A {\em referential constraint} enforces the foreign key of a predicate
$P_1$ to be the primary key of predicate~$P_2$.  A database instance
satisfies a referential constraint on the last $k$ attributes, denoted
as $(P_1\to P_2, k)$, if for every $q_{1}$-tuple $\vec{u}\in D(P_1)$,
there exists a $q_{2}$-tuple $\vec{u}'\in D(P_2)$ such that for all
$1\leq j \leq k$ we have that $u_{q_1-k+j}=u_j'$.
A referential constraint can also be translated as an FO-sentence:
\begin{eqnarray*}
\forall \vec{x} \left( P_1(\vec{x}) \rightarrow \exists \vec{y} \left(P_2(\vec{y}) \wedge \bigwedge_{j=1}^{k} (x_{q_1-k+j}=y_j)\right) \right)
\end{eqnarray*}

\begin{example} \label{ex12}
Clearly, in the database instance in Example~\ref{ex11} there is a key
dependency between IDs and the other attributes in relations
$\mathit{Students}$ and $\mathit{Staff}$, as it is to be expected
from any well-defined notion of ID.  On the other hand, in relation
$\mathit{Staff}$ the department is not a key, as two different
tuples have ``Mechanical Engineering'' as value for this attribute.
\end{example}

\section{Aggregators} \label{aggregators}

The main research question we investigate in this paper concerns how to
define an aggregated database instance from the instances of
$\N=\{1,\dots,n\}$ agents.  This question is typical in social choice
theory, where judgements, preferences, etc., are aggregated according
to some notion of rationality that will be introduced in
Section~\ref{sec:collectiverationality}.

For the rest of the paper we fix a database schema $\D$ over a common
domain $\U$, and consider a {\em profile } $\vec{D} = (D_1, \dots,
D_n)$ of $n$ instances over $\D$ and $\U$.  Then, we define what is an
aggregation procedure on such instances.
\begin{definition}[Aggregation Procedure] \label{aggregator}
Given database schema $\D$ and domain $\U$, an {\em aggregation
  procedure} $F: \D(\U)^n \to \D(\U)$ is a function assigning to each
  tuple $\vec{D}$ of instances for $n$ agents an aggregated instance
  $F(\vec{D}) \in \D(\U)$.

Let $\F$ be the class of all aggregation
  procedures.
\end{definition}

We use $N_{\vec{u}}^{\vec{D}(P)} :: = \{ i \in \N \mid \vec{u} \in
D_i(P) \}$ to denote the set of agents accepting tuple $\vec{u}$ for
symbol $P$, under profile $\vec{D}$. Note that considering a unique
domain $\U$ is not really a limitation of the proposed approach:
instances $D_1, \ldots, D_n$, each on a possibly different domain
$\U_i$, for $i \leq n$, can all be seen as instances on the union
$\bigcup_{i \in \N} \U_i$ of all domains.

Hereafter we illustrate and discuss some examples of aggregation
procedures. We begin with the class of quota rules, inspired by their homonyms in judgment
aggregation \cite{DietrichListJTP2007}. This class includes the classical majority rule, as well 
the union and the intersection rules which are
well-known in modal epistemic logic, corresponding
to distributed knowledge and ``everybody knows
that'' \cite{Hintikka1962}.

\smallskip
 \textbf{Union Rule} (or nomination): for every $P \in \D$, $F(\vec{D})(P)=\bigcup_{i
  \leq n} D_i(P) $. 
Intuitively, every agent is seen as having partial but correct
  information about the state of the world. Union can be considered a
  good aggregator if databases represent the agents' knowledge bases
  (certain information).
\smallskip

\textbf{Intersection Rule} (or unanimity): for every $P \in \D$,
  $F(\vec{D})(P)=\bigcap_{i \leq n} D_i(P) $. 
Here every agent is supposed to have a partial and possibly incorrect
  vision of the state of the world.
\smallskip

\textbf{Quota Rules}: a {\em quota} rule is an aggregation rule $F$
  defined via functions $q_P : \U^q \to \{0,1, \ldots , n+1 \}$,
  associating each symbol $P$ and $q$-uple with a quota, by
  stipulating that $\vec{u} \in F(\vec{D})(P)$ iff
  $|\{i \mid \vec{u}\in D_i(P)\}| \geq q_P(\vec{u})$.  $F$ is called
  {\em uniform} whenever $q$ is a constant function for all tuples and
  symbols.  Intuitively, if a tuple $\vec{u}$ appears in at least
  $q_P(\vec{u})$ of the initial databases, then it is accepted for
  symbol $P$.

\smallskip

  
The (strict) majority rule is a (uniform) quota rule for $q = \lceil
(n+1)/2
\rceil$; while  union and intersection are quota rule for $q = 1$ and $q = n$ respectively. We call the uniform quota rules for $q = 0$ and $q =
n + 1$ {\em trivial rules}.

The literature on belief merging has proposed and studied extensively
procedures based on
distances \cite{Konieczny2000,KoniecznyPinoPerezJLC2002,KoniecznyLangMarquisAIJ2004},
and some of these rules have also been proposed in judgment
aggregation \cite{MillerOsherson2009}. We mention below one of the
archetypal rules in this class, which makes use of the symmetric
distance.

\smallskip
\textbf{Distance-Based Rule}: 
%
%
\begin{eqnarray*}
F(\vec{D}) = \argmin_{D\in \mathcal D-\text{\it instances}} \sum_{i \in \N}\sum_{P\in\mathcal D} (|D_i(P)\setminus D(P)| + |D(P) \setminus D_i(P)|)
\end{eqnarray*}

\noindent
Intuitively, the symmetric distance minimizes the ``distance'' between
the aggregated database $F(\vec{D})$ and each $D_i$, defined as the
number of tuples in $D_i$ but not in $F(\vec{D})$, plus the number of
tuples in $F(\vec{D})$ but not in $D_i$, calculated across all
$i \in \N$.

Computing the result of distance-based rules is typically a hard computational problem: for instance, the above version on arbitrary propositional constraints is a $\Theta_2^p$-complete problem \cite{KoniecznyEtAlKR2002}. 
Tractable versions can however be obtained
by restricting the minimisation to the databases obtained in the input
profiles, a viable solution when the set of individual agents in the
input is sufficiently large.  These rules are known in the literature
on judgment aggregation as \emph{most representative voter
rules} \cite{EndrissGrandiAAAI2014}, and we state here the simplest
one.

\textbf{Average Voter Rule}: 
\begin{eqnarray*}
F(\vec{D}) & = & \argmin_{\{D_i\mid i\in\N\}} \sum_{j \in \N} \sum_{P\in\mathcal D} (|D_j(P)\setminus D_i(P)| + |D_i(P) \setminus D_j(P)|)
\end{eqnarray*}


Observe the the two rules above might output a set of equally
preferred extensions for a relation symbol $P$. i.e., they are \emph{non-resolute} rules.
%
We also consider a slight variant of the average voter rule.

\textbf{Relation-wise Average Voter Rule}: 
\begin{eqnarray} \label{ag1}
F(\vec{D}) & = & \bigcup_{P \in\mathcal D} \argmin_{\{D_i(P) \mid
i\in\N\}} \sum_{j \in \N} (|D_j(P)\setminus D_i(P)| +
|D_i(P) \setminus D_j(P)|)
\end{eqnarray}

Notice that, according to (\ref{ag1}), the average is computed for
each $P \in \D$ independently. In particular, $F(\vec{D})$ does not
correspond to any of $D_1, \ldots, D_n$ in general.

We now state a class of rules which are typically considered 
non-desirable in the literature on social choice theory, since
they leave a somewhat large set of agents out of the aggregation.

\smallskip
\textbf{Dictatorship of Agent $i^* \in \N$}: 
we have that $F(\vec{D}) = D_{i^*}$, i.e., the dictator $i^*$
  completely determines the aggregated database.
\smallskip

{\bf Oligarchy of Coalition $C^* \subseteq \N$}: for every $P \in \D$,
  $F(\vec{D})(P) = \bigcap_{i \in C^*} D_{i}(P) $.  Oligarchy reduces
  to dictatorship for singletons, and to intersection for $C^* = \N$.
\smallskip

To conclude, we present a novel definition of aggregation procedure
inspired by the literature on incomplete information in
databases \cite{Libkin15}.

\smallskip
{\bf Merge with Incomplete Information}: for every $P \in \D$,
  $\vec{u} \in F(\vec{D})(P)$ iff (i) for some $\vec{u}_1 \in
  D_{1}(P)$, \ldots, $\vec{u}_n \in D_{n}(P)$, for every $k \leq q$,
  either $u_{1,k} = \ldots = u_{n,k}$ and $u_k = u_{1,k}$, or
  $u_{j,k} \neq u_{j'',k}$ for some $j, j'' \leq n$, and $u_j = \bot$,
  where $\bot$ is a new symbol; (ii) for every $\vec{u}' \in
  F(\vec{D})(P)$ and $k \leq q$, $u_k = \bot$ implies $u'_k = \bot$ or
  for some $k \leq q$, $u_k \neq u'_k$.
\smallskip

That is, by (i) whenever elements $u_1, \ldots, u_k$ appear at the
same positions in some tuples in the profile, then they will appear at
those positions in $F(\vec{D})$. On the other hand, if different
elements appear, then we insert symbol $\bot$ as a placeholder.  By
(ii) we discard tuples with ``strictly less'' information.  Notice
that merge with incomplete information does not conform entirely with
Def.~\ref{aggregator}, as the outcome $F(\vec{D})$ is a database
instance on $\D(\U \cup \{\bot\})$, rather than $\D(\U)$. Nonetheless
all relevant notions on databases and aggregators can be extended
seamlessly in what follows.



\begin{example} \label{ex13}
Suppose that the database instance $D_F$ in Example~\ref{ex11} is
owned by the HR department of the faculty. On the other hand, the
registrar and the head office own the following instances $D'_F$ and $D''_F$ respectively, due to
differences in updating mechanisms and possibly errors:

\hspace{-0.9cm}
{\footnotesize
\begin{tabular}{cc}

\begin{tabular}{|l|l|l|l|l|l|}
\hline
\multicolumn{3}{|l|}{$\mathit{Students}$} & \multicolumn{3}{|l|}{$\mathit{Staff}$}\\
\hline
ID & Name & Department & ID & Name & Department\\
\hline
\hline
10 & Steve & History & 01 & Rose & Mech.~Eng.\\
11 & Carole & CS & 02 & Audrey & Mech.~Eng.\\
 &  & & 04 &  Carl & History\\
\hline
\end{tabular}

\begin{tabular}{|l|l|l|l|l|l|}
\hline
\multicolumn{3}{|l|}{$\mathit{Students}$} & \multicolumn{3}{|l|}{$\mathit{Staff}$}\\
\hline
ID & Name & Department & ID & Name & Department\\
\hline
\hline
10 & Steve & History & 01 & Rose & Mech.~Eng.\\
11 & Carole & CS & 02 & Aubrey & Mech.~Eng.\\
12 & Derek & Mech.~Eng. & 03 &  Karl & History\\
13 & Marc  & History &  &  &\\
\hline
\end{tabular}
\end{tabular}
}
\medskip

\noindent
To provide a unique vision of instances $D_F$, $D'_F$, and $D''_F$ we
can in principle choose any of the aggregation procedures introduced
above. For instance, the intersection, union and majority aggregated
profiles can be given as follows:\\

{\footnotesize
\begin{tabular}{cc}
\textbf{Intersection}:&
\begin{tabular}{|l|l|l|l|l|l|}
\hline
\multicolumn{3}{|l|}{$\mathit{Students}$} & \multicolumn{3}{|l|}{$\mathit{Staff}$}\\
\hline
ID & Name & Department & ID & Name & Department\\
\hline
\hline
10 & Steve & History & 01 & Rose & Mech.~Eng.\\
11 & Carole & CS & & & \\
\hline
\end{tabular}\\
& \\

\textbf{Union}: &
\begin{tabular}{|l|l|l|l|l|l|}
\hline
\multicolumn{3}{|l|}{$\mathit{Students}$} & \multicolumn{3}{|l|}{$\mathit{Staff}$}\\
\hline
ID & Name & Department & ID & Name & Department\\
\hline
\hline
10 & Steve & History & 01 & Rose & Mech.~Eng.\\
11 & Carole & CS & 02 & Audrey & Mech.~Eng.\\
12 & Derek & Mech.~Eng. & 02 &  Aubrey & Mech.~Eng.\\
13 & Marc  & History & 03 &  Karl & History\\
 &  &  & 04 &  Carl & History\\
\hline
\end{tabular}\\

& \\

\textbf{Majority}:&
\begin{tabular}{|l|l|l|l|l|l|}
\hline
\multicolumn{3}{|l|}{$\mathit{Students}$} & \multicolumn{3}{|l|}{$\mathit{Staff}$}\\
\hline
ID & Name & Department & ID & Name & Department\\
\hline
\hline
10 & Steve & History & 01 & Rose & Mech.~Eng.\\
11 & Carole & CS & & & \\
12 & Derek & Mech.~Eng. & & & \\
\hline
\end{tabular}
\end{tabular}
}
\medskip

Clearly, some aggregation procedure do not preserve all integrity constraints, e.g., unions do not preserve key dependencies.
Furthermore, aggregation by the average voter rule would output $D_F$, and by merge with incomplete information would produce the following instance:\\

{\footnotesize
\begin{tabular}{cc}
\textbf{Merge with inc. information:} &
\begin{tabular}{|l|l|l|l|l|l|}
\hline
\multicolumn{3}{|l|}{$\mathit{Students}$} & \multicolumn{3}{|l|}{$\mathit{Staff}$}\\
\hline
ID & Name & Department & ID & Name & Department\\
\hline
\hline
10 & Steve & History & 01 & Rose & Mech.~Eng.\\
11 & Carole & CS & 02 & $\bot$ & Mech.~Eng.\\
 &  &  & $\bot$ & $\bot$ & History\\
\hline
\end{tabular}
\end{tabular}
}
\medskip

\noindent
In particular, in this last instance symbol $\bot$ intuitively signals
that we are uncertain about the name of staff with ID 02, as well as
about the ID and name of staff in the History department.
\end{example}

Note that further aggregation procedures are possible in principle.
We choose to focus on those above
as they are inspired by well-studied procedures from the literature
and, with the exception of the distance-based rule, they are tractable
computationally.

\section{The Axiomatic Method}\label{sec:axioms}

Aggregation procedures are best characterised by means of axioms. In
particular, we consider the following properties, where relation
symbols $P, P' \in \D$, profiles $\vec{D}, \vec{D}' \in \D(\U)^n$,
tuples $\vec{u}$, $\vec{u}' \in \U^+$ are all universally quantified.
We leave the treatment of the merge with incomplete information rule
for the end of the section.
%

\smallskip

\textbf{Unanimity ($U$)}: 
$F(\vec{D})(P) \supseteq \bigcap_{i \in \N} D_i(P)$.

\smallskip

By unanimity a tuple accepted by all agents also appears in the
aggregated database (for the relevant relation symbol). With the
exception of the distance-based rule and trivial quota rules with any
of the $q_P=n+1$, the remaining aggregators from
Section~\ref{aggregators} all satisfy unanimity.
%

\smallskip

\textbf{Groundedness ($G$)}: 
$F(\vec{D})(P) \subseteq \bigcup_{i \in \N}
  D_i(P)$.

\smallskip

By groundedness any tuple appearing in the aggregated database must be
accepted by some agent. All aggregators in Section~\ref{aggregators},
with the exception of the distance-based rule and the trivial quota
rule with any of the $q_P=0$, satisfy this property.

\smallskip

\textbf{Anonymity ($A$)}: for every 
permutation $\pi : \N \to \N$, we have $F(D_1, \ldots , D_n ) = F
(D_{\pi(1)} , \ldots , D_{\pi(n)})$.

\smallskip

By anonymity the identity of agents is irrelevant for the aggregation
procedure. This is the case for all aggregators in
Section~\ref{aggregators} but dictatorship and oligarchy.

\smallskip

\textbf{Independence ($I$)}: 
if $N_{\vec{u}}^{\vec{D}(P)} = N_{\vec{u}}^{\vec{D}'(P)}$ then
$\vec{u} \in F(\vec{D})(P)$ iff $\vec{u} \in F(\vec{D}')(P)$.

\smallskip

Intuitively, if the same agents accept (resp.~reject) a tuple in two
different profiles, then the tuple is accepted (resp.~rejected) in
both aggregated instances.  The axiom of independence is a widespread
requirement from social choice theory, and is arguably the main cause
of most impossibility theorems, such as Arrow's seminal
result \cite{Arrow1963}.  From a computational perspective,
independent rules are typically easier to compute than non-independent
ones. Quota rules satisfy independence, as well as dictatorships and oligarchies.
%




\smallskip

\textbf{Positive Neutrality ($N^+$)}: 
if $N_{\vec{u}}^{\vec{D}(P)} = N_{\vec{u}'}^{\vec{D}(P)}$ then
$\vec{u} \in F(\vec{D})(P)$ iff $\vec{u}' \in F(\vec{D})(P)$.
\smallskip

\textbf{Negative Neutrality ($N^-$)}: 
if $N_{\vec{u}}^{\vec{D}(P)} = \N \setminus N_{\vec{u}'}^{\vec{D}(P)}$ then
$\vec{u} \in F(\vec{D})(P)$ iff $\vec{u}' \not\in F(\vec{D})(P)$.
\smallskip

Observe that both versions of neutrality differ from independence as
here we consider two different tuples in the same profile, while
independence deals with the same tuple in two different profiles.  Again, with the exception of the distance-based rule all aggregators introduced in
Section~\ref{aggregators} satisfy positive neutrality.
Most quota rules including union and intersection do not satisfy negative neutrality (see
Lemma~\ref{lemma:quotaaxioms} below), 
dictatorships and oligarchies satisfy the latter axiom.
%








\smallskip

\textbf{Permutation-Neutrality ($N^{P}$)}: Let $\rho
  : \U \to \U$ be a permutation over the domain $\U$, and $\rho(\vec{D})$ its straightforward lifting to a profile $\vec{D}$, then 
  $F(\rho(\vec{D}))=\rho(F(\vec{D}))$.

\smallskip

All aggregators introduced in Section~\ref{aggregators} satisfy
permutation-neutrality.
%
%
We conclude with the following axiom, that formalises the fact that an aggregator keeps on accepting a given tuple if the support for that tuple increases.
 
\smallskip

\textbf{Monotonicity ($M$)}: 
if $\vec u \in F(\vec{D})(P)$ and for every $i \in \N$, either
$D_i(P) = D'_i(P)$ or $D_i(P) \cup \{\vec{u}\} \subseteq D'_i(P)$, then $\vec
u \in F(D')(P)$.
\smallskip





\smallskip

Combinations of the axioms above can be used to characterise some of
the rules that we defined in Section~\ref{aggregators}. Some of these
results, such as the following, lift to databases known results in
judgement (propositional) aggregation.

\begin{lemma} \label{lemm1}
An aggregation procedure satisfies $A$, $I$, and $M$ 
iff it is a quota rule.
\end{lemma}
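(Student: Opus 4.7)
The plan is to prove both directions of the characterisation. The easy direction ($\Leftarrow$) is a direct verification: if $F$ is a quota rule defined by functions $q_P$, then acceptance of $\vec{u}$ in $F(\vec{D})(P)$ depends only on the cardinality of $N_{\vec{u}}^{\vec{D}(P)}$, which is immediately anonymous, depends only on $N_{\vec{u}}^{\vec{D}(P)}$ itself (hence independent), and clearly monotonic, since enlarging some $D_i(P)$ to include $\vec{u}$ can only weakly increase $|N_{\vec{u}}^{\vec{D}(P)}|$ and therefore preserves the inequality $|N_{\vec{u}}^{\vec{D}(P)}| \geq q_P(\vec{u})$.

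For the substantive direction ($\Rightarrow$), suppose $F$ satisfies $A$, $I$, and $M$. The goal is to exhibit quota functions $q_P : \U^q \to \{0,\dots,n+1\}$ witnessing that $F$ is a quota rule. The strategy has two steps. First, I would show that membership of $\vec{u}$ in $F(\vec{D})(P)$ depends \emph{only} on the cardinality $|N_{\vec{u}}^{\vec{D}(P)}|$. Given two profiles $\vec{D}, \vec{D}'$ with $|N_{\vec{u}}^{\vec{D}(P)}| = |N_{\vec{u}}^{\vec{D}'(P)}|$, choose a permutation $\pi : \N \to \N$ that bijects $N_{\vec{u}}^{\vec{D}'(P)}$ onto $N_{\vec{u}}^{\vec{D}(P)}$; this exists since both sets have the same size. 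Define $\vec{D}^{\pi} = (D_{\pi(1)},\dots,D_{\pi(n)})$, and observe that $N_{\vec{u}}^{\vec{D}^{\pi}(P)} = \pi^{-1}(N_{\vec{u}}^{\vec{D}(P)}) = N_{\vec{u}}^{\vec{D}'(P)}$. Anonymity gives $F(\vec{D}) = F(\vec{D}^{\pi})$, and then independence applied to $\vec{D}^{\pi}$ and $\vec{D}'$ yields $\vec{u} \in F(\vec{D})(P) \Leftrightarrow \vec{u} \in F(\vec{D}')(P)$.

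Second, I would use monotonicity to show that the set $S_{P,\vec{u}} := \{k : \vec{u} \in F(\vec{D})(P) \text{ for some (equivalently, all) } \vec{D} \text{ with } |N_{\vec{u}}^{\vec{D}(P)}| = k\}$ is upward-closed in $\{0,\dots,n\}$. Given a profile witnessing $k \in S_{P,\vec{u}}$, pick any agent $i_0$ with $\vec{u} \notin D_{i_0}(P)$ and form $\vec{D}'$ by replacing $D_{i_0}(P)$ with $D_{i_0}(P) \cup \{\vec{u}\}$, leaving all other components untouched; then $|N_{\vec{u}}^{\vec{D}'(P)}| = k+1$ and the premise of $M$ is satisfied, so $\vec{u} \in F(\vec{D}')(P)$. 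Iterating gives upward closure. Setting $q_P(\vec{u}) = \min S_{P,\vec{u}}$ (with the convention $\min \emptyset = n+1$) and appealing to Step~1 then yields exactly the quota rule condition $\vec{u} \in F(\vec{D})(P) \Leftrightarrow |N_{\vec{u}}^{\vec{D}(P)}| \geq q_P(\vec{u})$.

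The main obstacle is Step~1, specifically managing the mismatch between the \emph{global} action of anonymity (which permutes whole database instances and all their relation symbols simultaneously) and the \emph{local} content of independence (which concerns a single tuple for a single symbol). Chaining a single permutation step with a single independence step, as above, resolves this cleanly: anonymity moves the \emph{supporter set} for $(P,\vec{u})$ to the desired one without caring about other relations, and then independence erases any remaining differences in the profiles that are not reflected in that supporter set. No further axioms are needed, which confirms that $A$, $I$, $M$ are jointly sufficient.
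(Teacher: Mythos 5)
Your proof is correct and follows essentially the same route as the paper's own (which is only a sketch): independence localises acceptance of a tuple to its supporter set, anonymity reduces this to the set's cardinality, and monotonicity makes the set of accepting cardinalities upward-closed, whose minimum is the quota. Your Step~1, chaining a permutation of agents with a single application of independence, is a careful filling-in of a detail the paper glosses over, and the rest matches the paper's argument.
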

\begin{proof}[Proof sketch]
The right-to-left implication follows from the fact that quota rules
satisfy independence $I$, anonymity $A$, and monotonicity $M$, as we
remarked above.
For the left-to-right implication, observe that, to accept a
given tuple $\vec{u}$ in $F(\vec{D})(P)$, an independent aggregation
procedure will only look at the set of agents $i \in \N$ such that
$\vec{u} \in D_i(P)$. If the procedure is also anonymous, then
acceptance is based only on the number of individuals admitting the
tuple. Finally, by monotonicity, there is a minimal number of
agents required to trigger collective acceptance. That number is the
quota associated with the tuple and the symbol at hand.
\end{proof}


If we add neutrality,
then we obtain the class of uniform quota rules. If we furthermore
impose unanimity and groundedness, we exclude the trivial quota rules.
\begin{lemma}\label{lemma:quotaaxioms}
If the number of individuals is odd and $|\D| \geq 2$, an aggregation
procedure $F$ satisfies $A$, $N^-$, $N^+$, $I$ and
$M$ on the full domain $\D(\U)^n$ if and only if it is the majority
rule.
\end{lemma}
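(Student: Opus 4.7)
The right-to-left direction is a straightforward verification. The strict majority rule is the uniform quota rule with $q = \lceil (n+1)/2 \rceil$, so by Lemma~\ref{lemm1} it already satisfies $A$, $I$, and $M$. Positive neutrality $N^+$ is immediate because acceptance under a quota rule depends only on the cardinality $|N_{\vec{u}}^{\vec{D}(P)}|$. For negative neutrality, if $N_{\vec{u}}^{\vec{D}(P)} = \N \setminus N_{\vec{u}'}^{\vec{D}(P)}$, let $k = |N_{\vec{u}}^{\vec{D}(P)}|$, so $|N_{\vec{u}'}^{\vec{D}(P)}| = n-k$; since $n$ is odd, exactly one of $k$ and $n-k$ is $\geq (n+1)/2$, hence exactly one of $\vec{u}, \vec{u}'$ is accepted.

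For the left-to-right direction, I would refine Lemma~\ref{lemm1} in two stages. First, by that lemma, the hypotheses $A$, $I$, $M$ already force $F$ to be a quota rule, determined by functions $q_P : \U^q \to \{0,1,\ldots,n+1\}$. Next, I would use $N^+$ to show that, for each fixed $P$, the function $q_P$ is constant in its tuple argument. If $q_P(\vec{u}) < q_P(\vec{u}')$ for some tuples $\vec{u}, \vec{u}'$, then on the full domain $\D(\U)^n$ one can construct a profile $\vec{D}$ in which exactly a set $S$ of agents with $q_P(\vec{u}) \leq |S| < q_P(\vec{u}')$ includes \emph{both} tuples in its interpretation of $P$ (the other tuples and predicates can be set arbitrarily). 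Then $N_{\vec{u}}^{\vec{D}(P)} = N_{\vec{u}'}^{\vec{D}(P)} = S$ but $\vec{u} \in F(\vec{D})(P)$ and $\vec{u}' \notin F(\vec{D})(P)$, contradicting $N^+$. Hence we may write $q_P$ for the common value.

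Finally, I would apply $N^-$ together with the oddness of $n$ to pin $q_P$ down to $(n+1)/2$ for every $P$. Fix $P$, pick two distinct tuples $\vec{u} \neq \vec{u}'$, and for any $k \in \{0,\ldots,n\}$ build a profile in which $N_{\vec{u}}^{\vec{D}(P)}$ is some set $S$ of size $k$ while $N_{\vec{u}'}^{\vec{D}(P)} = \N \setminus S$; since $\vec{u} \neq \vec{u}'$ the two memberships can be chosen independently in each $D_i(P)$. By $N^-$, exactly one of $k \geq q_P$ and $n-k \geq q_P$ holds. Specialising to $k = q_P$ forces $q_P > n/2$, and specialising to $k = q_P - 1$ forces $q_P \leq (n+1)/2$; using that $n$ is odd these bounds collapse to $q_P = (n+1)/2$. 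Since the argument is uniform in $P$, $F$ is the majority rule.

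\textbf{Main obstacle.} The delicate point is the profile-construction in the second stage: one must exhibit a single profile in which two distinct tuples have an identically prescribed, non-trivial support set. The assumption that $F$ is defined on the full domain $\D(\U)^n$ is what makes this possible, and the mild richness hypothesis $|\D| \geq 2$ ensures enough flexibility to carry the construction through without interfering with the values chosen for other predicates.
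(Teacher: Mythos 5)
Your proof is correct and follows essentially the same route as the paper's (much terser) argument: invoke Lemma~\ref{lemm1} to get a quota rule, use $N^+$ to make each $q_P$ uniform over tuples, and use $N^-$ together with the oddness of $n$ to force $q_P = (n+1)/2$. You simply make explicit the profile constructions and the verification of the right-to-left direction that the paper leaves implicit.
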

\begin{proof}
By positive neutrality the quota must be the same for all tuples and all
relation symbols. By negative neutrality the two sets
$N_{\vec{u}}^{\vec{D}(P)}$ and $\N \setminus N_{\vec{u}}^{\vec{D}(P)}$
must be treated symmetrically.  Hence, the only possibility is to have
a uniform quota of $(n +1)/2$.
\end{proof}

The corresponding versions of these results have been proved to hold
in judgment and graph
aggregation \cite{DietrichListJTP2007,EndrissGrandiAIJ2017}.
%
We now show the following equivalence between majority and the distance-based rule in the absence of  integrity constrains:
\begin{lemma}
In absence of constraints, and for an odd number of
agents, the distance-based rule coincides with the majority rule.
\end{lemma}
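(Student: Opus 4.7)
The plan is to exploit the fact that the symmetric-difference objective decomposes tuple-by-tuple and symbol-by-symbol, so the global minimisation reduces to an independent binary decision for each pair $(P,\vec{u})$.

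First I would rewrite the objective. For each $P \in \D$ and each $\vec{u} \in \U^{q}$, let $n_{\vec u} = |N_{\vec u}^{\vec D(P)}|$. A direct count shows that the contribution of the pair $(P,\vec u)$ to $\sum_{i}\sum_{P}(|D_i(P)\setminus D(P)|+|D(P)\setminus D_i(P)|)$ equals $n - n_{\vec u}$ if $\vec u \in D(P)$ and $n_{\vec u}$ if $\vec u \notin D(P)$. Hence the total distance is
\begin{equation*}
\sum_{P\in\D}\Bigl(\sum_{\vec u \in D(P)} (n - n_{\vec u}) + \sum_{\vec u \notin D(P)} n_{\vec u}\Bigr),
\end{equation*}
and each pair $(P,\vec u)$ can be optimised independently of the others.

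Next I would argue that the minimiser is unique and coincides with the majority rule. For each $(P,\vec u)$, including $\vec u$ in $D(P)$ is strictly better than excluding it iff $n - n_{\vec u} < n_{\vec u}$, i.e., iff $n_{\vec u} > n/2$; excluding is strictly better iff $n_{\vec u} < n/2$. Since $n$ is odd, the case $n_{\vec u} = n/2$ cannot occur, so in every instance the optimal choice is determined uniquely, and it is to include $\vec u$ precisely when $n_{\vec u} \geq \lceil (n+1)/2 \rceil$. This is exactly the condition defining the majority rule.

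A minor technical point I would address is that the $\argmin$ ranges over all $\D$-instances (an infinite set). However, for any tuple $\vec u$ with $n_{\vec u}=0$, inclusion strictly increases the distance, so the minimiser lies among instances whose tuples all appear in some $D_i(P)$; this is a finite search space, and the minimum is attained. The only obstacle worth noting is checking that the rewriting of the objective is correct: one has to verify that the contribution accounting above does not double-count and that it is valid even for tuples absent from every $D_i(P)$, which is immediate since then $n_{\vec u}=0$ and the contribution is $n$ if $\vec u \in D(P)$ and $0$ otherwise, again consistent with exclusion.
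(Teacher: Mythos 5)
Your proof is correct and follows essentially the same route as the paper's: both decompose the symmetric distance into independent per-tuple contributions (the paper via characteristic functions, you via the explicit counts $n-n_{\vec u}$ versus $n_{\vec u}$) and then observe that oddness of $n$ forces the pointwise optimum to be the majority decision. Your version merely spells out the counting and the well-definedness of the $\argmin$ more explicitly than the paper does.
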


\begin{proof}
%
With a slight abuse of notation, if $A \subseteq \U^{m}$ let
$A(\vec{u})$ be its characteristic function.  Recall the definition of
the distance-based rule.  Since the minimisation is not constrained,
and all structures are finite, the definition is equivalent to the
following:
\begin{eqnarray*}
F(\vec{D})(P) & = &\argmin_{D\in \mathcal D-\text{\it instances}} \sum_{i \in \N} \sum_{P\in\mathcal D}  \sum_{\vec{u}\in \U^{q_P}} (|D_i(P)(\vec{u})- D(P)(\vec{u})|)
 \\
& = & \argmin_{D\in \mathcal D-\text{\it instances}}  \sum_{P\in\mathcal D}  \sum_{\vec{u}\in \U^{q_P}} \sum_{i \in \N} (|D_i(P)(\vec{u})- D(P)(\vec{u})|)
\end{eqnarray*}
Therefore, for each $P\in\mathcal D$ and for each $\vec{u}$, if for a majority of the individuals in
$\N$ we have that $\vec{u}\in D_i(P)$, then $\vec{u}\in D(P)$ minimises
the overall distance, and symmetrically for the case in which a
majority of individuals are such that $\vec{u}\not \in
D_i(P)$. \end{proof}

It is easy to see that the above lemma does not hold in presence of arbitrary integrity constraints (see Example~\ref{ex:majority}).
We conclude this section with a result on merge with incomplete information. Note that the axioms need to be slightly adapted to account for the addition of a symbol in the output of the aggregator.
\begin{lemma}
Merge with incomplete information satisfies $U$, $A$, $I$, and $N^+$,
but not $N^-$, nor $M$. In particular, by Lemma~\ref{lemm1} it is not
a quota rule.
\end{lemma}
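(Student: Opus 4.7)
The plan is to verify the four positive properties directly, exhibit explicit counterexamples for the two negative ones, and then invoke Lemma~\ref{lemm1} for the final clause. I would start from the following key observation: for any $\vec{u}\in\U^q$ with no occurrence of $\bot$, condition (i) of merge with incomplete information forces $u_{1,k}=\dots=u_{n,k}=u_k$ at every position $k$ (the second alternative in (i) requires $u_k=\bot$), so the witnesses $\vec{u}_1,\dots,\vec{u}_n$ must all coincide with $\vec{u}$; moreover condition (ii) is vacuous on a $\bot$-free tuple. Hence a $\bot$-free $\vec{u}$ lies in $F(\vec{D})(P)$ iff $\vec{u}\in\bigcap_{i\in\N}D_i(P)$, i.e.\ iff $N_{\vec{u}}^{\vec{D}(P)}=\N$. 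This collapses most of the positive-axiom bookkeeping into a single test on the support set.

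Next I would use this characterisation to discharge the positive axioms in turn. $U$: if $\vec{u}\in\bigcap_i D_i(P)$ then (i) is witnessed by $\vec{u}_i=\vec{u}$ for all $i$, and (ii) is vacuous. $A$: both clauses (i) and (ii) refer only to the multiset of databases, not to agent labels. $I$ and $N^+$: on $\bot$-free tuples, membership in $F(\vec{D})(P)$ depends only on whether the support set equals $\N$, which is a function solely of $N_{\vec{u}}^{\vec{D}(P)}$.

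For the two failures I would use a single minimal profile. Take $n=2$, a unary $P$, and $D_1(P)=\{a\}$, $D_2(P)=\{b\}$ with $a\neq b$, so that $F(\vec{D})(P)=\{\bot\}$. Then $N_a^{\vec{D}(P)}=\{1\}$ is the complement of $N_b^{\vec{D}(P)}=\{2\}$, yet neither $a$ nor $b$ lies in $F(\vec{D})(P)$; in particular ``$a\in F(\vec{D})(P)$ iff $b\notin F(\vec{D})(P)$'' is false, refuting $N^-$. For $M$ the counterexample must involve a tuple with $\bot$, since the observation above already shows that monotonicity holds on $\bot$-free tuples. On the same profile $\bot\in F(\vec{D})(P)$, but the monotone enlargement $D_2'(P)=\{a,b\}$ places $a$ in $F(\vec{D}')(P)$ and (ii) discards $\bot$ as strictly less informative than $a$; hence $\bot$ is lost, violating the (adapted) $M$.

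The final ``in particular'' clause then reduces to Lemma~\ref{lemm1}, which characterises quota rules as the aggregators satisfying $A$, $I$, and $M$: since $M$ fails, merge with incomplete information cannot be a quota rule. The main delicate point is the adaptation of the axioms to the extended output signature $\U\cup\{\bot\}$: for $U$, $A$, $I$, $N^+$ it suffices to read them on tuples in $\U^q$ (where the reduction to the intersection test applies verbatim), whereas for $M$ the witnessing tuple must be allowed to contain $\bot$, otherwise the failure simply would not manifest.
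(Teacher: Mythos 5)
Your proof is correct and follows essentially the same route as the paper's: both reduce the positive axioms $U$, $A$, $I$, $N^+$ to the observation that a $\bot$-free tuple enters the output exactly when its support set is all of $\N$, and both refute $N^-$ with a two-agent profile splitting a tuple's support into complementary singletons (the paper uses Example~\ref{ex13}, you use a minimal abstract version). You are in fact somewhat more thorough than the paper, which omits an explicit counterexample for $M$ and does not spell out, as you do, that the failure of $M$ can only be witnessed by a $\bot$-containing tuple under a suitably adapted reading of the axiom.
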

\begin{proof}
 The proof that merge satisfies unanimity $U$ is immediate, similarly
 for anonymity $A$. As regards independence $I$, if
 $N_{\vec{u}}^{\vec{D}(P)} = N_{\vec{u}}^{\vec{D}'(P)} = \N$ then
 $\vec{u} \in F(\vec{D})(P)$ and $\vec{u} \in F(\vec{D}')(P)$ by
 unanimity. On the other hand, if $N_{\vec{u}}^{\vec{D}(P)} =
 N_{\vec{u}}^{\vec{D}'(P)} \neq \N$ then $\vec{u} \notin
 F(\vec{D})(P)$ and $\vec{u} \notin F(\vec{D}')(P)$.  For positive
 neutrality the reasoning is similar: if $N_{\vec{u}}^{\vec{D}(P)} =
 N_{\vec{u}'}^{\vec{D}(P)} = \N$ then $\vec{u} \in F(\vec{D})(P)$ and
 $\vec{u}' \in F(\vec{D})(P)$; whereas if $N_{\vec{u}}^{\vec{D}(P)} =
 N_{\vec{u}'}^{\vec{D}(P)} \neq \N$ then $\vec{u} \notin
 F(\vec{D})(P)$ and $\vec{u}' \notin F(\vec{D})(P)$.  Finally, it is
 not difficult to find counterexamples for both negative neutrality
 $N^-$ and monotonicity $M$.
 For instance, as regards $N^-$ consider Example~\ref{ex13} and tuples $(02, Audrey, Mech.~Eng.)$ and $(02, Aubrey, Mech.~Eng.)$ in relation $\mathit{Staff}$.

\end{proof}






\section{Collective Rationality}\label{sec:collectiverationality}

%

In this section we analyse further the properties of the aggregation
procedures introduced in Section~\ref{aggregators}. First, we
present a notion of {\em collective rationality} that aims to capture
the appropriateness of a given aggregator $F$ w.r.t.~some constraint
$\phi$ on the input instances $D_1, \ldots, D_n$.
Hereafter let $\phi$ be a sentence in the first-order language
$\L_{\D}$ associated with $\D$, interpreted as an integrity constraint
that is satisfied by all $D_1, \ldots, D_n$.

%
\begin{definition}[Collective Rationality]
A constraint $\phi$ is lifted by an aggregation procedure $F$ if
whenever $D_i \models \phi$ for all $i \in \N$, then also
$F(\vec{D}) \models \phi$. 
An aggregation procedure $F :
\D(\U)^n \to \D(\U)$ is {\em collectively rational} (CR) with respect to
$\phi$ iff $F$ lifts $\phi$.
\end{definition}

Intuitively, an aggregator is CR w.r.t.~constraint $\phi$ iff it
lifts, or preserves, $\phi$. Consider the following:
\begin{example}\label{ex:majority}
By Example~\ref{ex13} unions are not collective rational
w.r.t. dependency constraints. We also provide a further example of
first-order collective (ir)rationality with the majority rule.
Consider agents 1 and 2 with database schema $\D = \{ P/1, Q/2 \}$.
Two database instances are given as $D_1 = \{ P(a), Q(a,b) \}$ and
$D_2 = \{ P(a), Q(a,c) \}$. Clearly, both instances satisfy
integrity constraint $\phi = \forall x (P(x) \to \exists y Q(x,y))$.
However, their aggregate $D = F(D_1,D_2) = \{ P(a) \}$, obtained by
the majority rule, does not satisfy $\phi$.
These small examples, which can be considered a {\em paradox} in the
sense of \cite{GrandiEndrissAIJ2013}, shows that not every constraint
in the language $\L_{\D}$ is collective rational w.r.t.~unions and
majority, thus obtaining a first, simple negative result.
\end{example}

We now focusing on integrity constraints that are proper to databases,
as defined in Section~\ref{sec:preliminaries}, presenting sufficient
(and possibly necessary) conditions for aggregators to lift them.  We
begin with functional dependencies.

\begin{proposition} \label{prop1}
A quota rule lifts a functional constraint iff for
all relation symbols $P$ occurring in the functional constraint
we have that $q_P > \frac{n}{2}$, where $n$ is the number of agents.
\end{proposition}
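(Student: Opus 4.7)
The plan is to prove both directions by a simple pigeonhole argument on the set of agents supporting a tuple.

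For the ``if'' direction, assume that $q_P > n/2$ for each relation symbol $P$ appearing in the functional dependency, and suppose each $D_i$ satisfies the FD $\ell_1,\ldots,\ell_k \mapsto \ell_{k+1},\ldots,\ell_q$ for $P$. I would argue by contradiction: assume $F(\vec D)(P)$ contains two $q$-tuples $\vec u, \vec u'$ that agree on the first $k$ coordinates but disagree on some coordinate $i > k$. By the definition of a quota rule, the sets $N_{\vec u}^{\vec D(P)}$ and $N_{\vec u'}^{\vec D(P)}$ each have size at least $q_P$. Since $q_P > n/2$, their sum strictly exceeds $n$, so by pigeonhole there is an agent $j$ in both. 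Then $D_j(P)$ contains both $\vec u$ and $\vec u'$, contradicting $D_j \models \phi$.

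For the ``only if'' direction, suppose $q_P \leq n/2$ for some symbol $P$ appearing in the functional constraint. The idea is to build a profile witnessing non-lifting. Pick two $q$-tuples $\vec u, \vec u' \in \U^q$ agreeing on positions $1,\ldots,k$ but differing on some position in $\{k+1,\ldots,q\}$ (this is possible since $\U$ has at least two elements whenever the FD is non-trivial, i.e.\ $k < q$). Because $q_P \leq n/2$, I can choose two disjoint coalitions $S, S' \subseteq \N$ with $|S|, |S'| \geq q_P$. Define the profile so that every $D_i$ with $i \in S$ contains exactly the tuple $\vec u$ in $P$, every $D_i$ with $i \in S'$ contains exactly $\vec u'$ in $P$, and the remaining agents' $D_i(P)$ are empty; for all other relation symbols, set the instances arbitrarily (e.g.\ empty) so the FD is trivially preserved. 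Each $D_i$ then satisfies the FD vacuously, because $D_i(P)$ contains at most one tuple. However, both $\vec u$ and $\vec u'$ meet the quota $q_P$, so $\{\vec u, \vec u'\} \subseteq F(\vec D)(P)$, and the aggregated instance violates the FD.

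The argument is essentially combinatorial and should go through smoothly; the only delicate point is making sure the counterexample profile in the ``only if'' direction satisfies the FD on \emph{all} relation symbols mentioned (here there is just one, $P$, since functional dependencies are per-predicate), and that we can indeed pick the two distinguishable tuples, which is trivial unless the FD is degenerate ($k = q$), in which case the FD is a tautology and lifting holds for any quota. I would flag this edge case in passing, since the statement as given implicitly assumes the constraint is non-trivial.
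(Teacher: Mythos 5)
Your proof is correct and follows essentially the same pigeonhole argument as the paper: overlapping support coalitions force a single agent to hold two conflicting tuples in the ``if'' direction, and two disjoint coalitions of size $q_P \leq n/2$ witness failure in the ``only if'' direction. You are in fact more explicit than the paper on the converse (the paper compresses it into one sentence) and on the degenerate cases ($k=q$, availability of two distinct tuples), which is a welcome tightening rather than a divergence.
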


\begin{proof}
By assumption, every instance $D_i$ satisfies the constraint.  That is
for every tuple $(u_1,\dots, u_k)$, either there is a unique
$(u_{k+1},\dots, \allowbreak u_q)$ such that
$(u_1,\dots,u_q)=\vec{u}\in D_i(P)$, or there is none.  Suppose now
that the constraint is falsified by the collective outcome.  That is,
there are $\vec{u} \neq \vec{u}'$ such that both $\vec{u}\in
F(\vec{D})(P)$ and $\vec{u}'\in F(\vec{D})(P)$, and
$\vec{u}$ and $\vec{u}'$ coincide on the first $k$ coordinates.  By
definition of quota rules, this means that at least $q_P$ voters are
such that $\vec{u}\in D_i(P)$, and at least $q_P$ possibly different
voters had $\vec{u}'\in D_i(P)$.  Since each individual can have
either $\vec{u}$ or $\vec{u}'$ in $D_i(P)$, by the pigeonhole
principle this is possible if and only if the quota
$q_P \leq \frac{n}{2}$.
\end{proof} 

As immediate applications of Prop.~\ref{prop1}, the intersection rule
clearly lifts any functional dependency, while the union lifts none,
as previously illustrated.
%
\begin{proposition}\label{prop2}
An aggregation procedure $F$ lifts a value constraint if $F$ is
grounded.
\end{proposition}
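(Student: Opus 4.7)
The plan is to argue pointwise: fix an arbitrary tuple $\vec{u} = (u_1, \ldots, u_q) \in F(\vec{D})(P)$ and show $u_k \in F(\vec{D})(P_v)$, which, ranging over all such $\vec{u}$, will give $F(\vec{D}) \models \forall x_1, \ldots, x_q (P(x_1,\ldots,x_q) \to P_v(x_k))$, i.e., the FO-encoding of the value constraint.

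First, I would apply groundedness directly to the symbol $P$: since $F(\vec{D})(P) \subseteq \bigcup_{i \in \N} D_i(P)$, the tuple $\vec{u}$ must belong to $D_i(P)$ for some agent $i \in \N$. Second, by hypothesis $D_i \models \phi$, and instantiating the universally quantified formula on $\vec{u}$ yields $u_k \in D_i(P_v)$. So after two short steps we have located a witness agent whose admissibility predicate already contains $u_k$.

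The final step is to transfer this membership from $D_i(P_v)$ to $F(\vec{D})(P_v)$. Here I would rely on the standard reading of a value constraint, namely that $P_v$ is a domain-style predicate whose extension (the list of admissible values) is shared across all $D_i$ and carried unchanged to the aggregate; under this reading $F(\vec{D})(P_v) = D_i(P_v)$ and $u_k \in D_i(P_v)$ closes the proof.

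The main obstacle is exactly this last transfer: groundedness applied to $P_v$ gives only $F(\vec{D})(P_v) \subseteq \bigcup_j D_j(P_v)$, which is the inclusion in the wrong direction; I cannot extract from it the pointwise statement $u_k \in F(\vec{D})(P_v)$ without some additional hypothesis on $P_v$. So I would make explicit the convention that $P_v$ is interpreted uniformly as the admissible-values set (equivalently, assume either all $D_i(P_v)$ coincide and are preserved by $F$, or that $F$ satisfies unanimity on $P_v$ so that both inclusions hold). Once this is acknowledged, the three-line argument above is complete and no further axiom beyond groundedness on $P$ is required.
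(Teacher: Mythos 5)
Your argument is essentially the paper's own: groundedness on $P$ yields a witness agent $i$ with $\vec{u}\in D_i(P)$, the individual constraint gives $u_k\in D_i(P_v)$, and the remaining work is transferring this to $F(\vec{D})(P_v)$. The obstacle you flag in the last step is real and is handled the same way in the paper, which simply stipulates that $D_i(P_v)=D_j(P_v)$ for all $i,j\in\N$ and then asserts $u_k\in F(\vec{D})(P_v)$ -- i.e., it relies on exactly the convention you make explicit, that the admissible-values list is shared and carried unchanged to the aggregate (groundedness alone indeed gives only $F(\vec{D})(P_v)\subseteq\bigcup_i D_i(P_v)$, the wrong-direction inclusion); so your proposal matches the intended proof and is, if anything, more candid about this point.
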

\begin{proof}
Let $n_k\in D(P_v)$ be a value constraint, where for all $i,j\in \N$,
we have that $D_i(P_v)=D_j(P_v)$.  A grounded aggregation procedure is
such that $F(\vec{D})(P)\subseteq \bigcup_{i \in \N} D_i(P)$. Hence,
for all $\vec{u}\in F(\vec{D})(P)$, there exists an $i\in \N$ such
that $\vec{u}\in D_i(P)$. Since all individual databases satisfy the
value constraint, we have that $u_k\in D_i(P_v)$, and therefore
$u_k\in F(\vec{D})(P_v)\subseteq \bigcup_{i \in \N} D_i(P_v)$, showing
that also $F(\vec{D})(P)$ satisfies the value constraint.
\end{proof}

The converse of the Prop.~\ref{prop2} is not true in general, since a
non-grounded aggregator could be easily devised while still satisfying
a given value constraint.
Finally, we consider referential constraints.
%
\begin{proposition} \label{prop3}
A quota rule lifts a referential constraint $(P_1\to P_2, k)$ iff
$q_{P_2} = 1$.
\end{proposition}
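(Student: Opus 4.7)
The plan is to prove the two directions separately. The sufficiency direction ($q_{P_2}=1 \Rightarrow F$ lifts the constraint) is the easier one. I would take any $\vec{u} \in F(\vec{D})(P_1)$, note that the quota for $P_1$ forces at least one agent $i$ to have $\vec{u} \in D_i(P_1)$ (setting aside the degenerate case $q_{P_1}(\vec{u})=0$, which either never triggers because no agent holds a tuple, or can be handled separately), and then invoke $D_i \models (P_1 \to P_2, k)$ to obtain a witness $\vec{u}' \in D_i(P_2)$ matching $\vec{u}$ on the last $k$ coordinates. Because $q_{P_2}=1$, the single agent $i$ already suffices to place $\vec{u}'$ in $F(\vec{D})(P_2)$, so the aggregate satisfies the referential constraint.

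For necessity I would argue by contrapositive: assuming $q_{P_2} \geq 2$, I construct a profile in which every individual database satisfies the constraint but the aggregate does not. Pick a tuple $\vec{u}$ of arity $q_1$ with $1 \leq q_{P_1}(\vec{u}) \leq n$ (otherwise the statement is vacuous), set $m = q_{P_1}(\vec{u})$, and choose $m$ pairwise distinct tuples $\vec{v}_1,\ldots,\vec{v}_m \in \U^{q_2}$ all sharing the first $k$ coordinates dictated by the last $k$ coordinates of $\vec{u}$, but differing in their remaining $q_2 - k$ coordinates; this is possible because $\U$ is countably infinite. Let $D_i$ contain $\vec{u}$ in $P_1$ and $\vec{v}_i$ in $P_2$ for $i \leq m$, and let $D_i$ be empty for $i > m$. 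Each $D_i$ locally satisfies $(P_1 \to P_2, k)$. In the aggregate, $\vec{u}$ meets its quota and so lies in $F(\vec{D})(P_1)$, while every candidate witness $\vec{v}_j$ is supported by exactly one agent and therefore falls short of $q_{P_2}(\vec{v}_j) \geq 2$; hence $F(\vec{D})(P_2)$ contains no witness for $\vec{u}$, and the constraint is violated.

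The main technical obstacle is engineering the witnesses in the necessity construction so that they simultaneously match $\vec{u}$ on the required coordinates in each $D_i$ and remain pairwise distinct so that no witness accumulates support from more than one agent. The infinite domain $\U$ together with the freedom in the first $q_2 - k$ coordinates of $P_2$ tuples supplies exactly the flexibility needed. A secondary concern is correctly handling quota edge cases, such as $q_{P_1}(\vec{u}) = 0$, which would place unsupported tuples in $F(\vec{D})(P_1)$, or witness tuples with $q_{P_2}(\vec{v}) = 0$, which would enter the aggregate for free; these can be absorbed by reading "$q_{P_2} = 1$" uniformly and restricting attention to attainable quotas, in the same spirit as Propositions~\ref{prop1} and~\ref{prop2}.
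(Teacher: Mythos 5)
Your proposal is correct and follows essentially the same route as the paper: sufficiency via a single supporting agent whose local witness already clears the quota $q_{P_2}=1$, and necessity via a profile in which the supporters of $\vec{u}$ hold pairwise distinct witnesses so that no witness accumulates more than one vote. You are in fact more explicit than the paper's own (rather terse) argument, particularly in spelling out the counterexample construction and the quota edge cases, which the paper glosses over.
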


\begin{proof}
Let $\vec{u}\in F(\vec{D})(P_1)$.  Since all the individual databases
satisfy the integrity constraint, we know that for every $i \in \N$
there exists a $\vec{u}_i\in D_i(P_2)$ such that its first $k$
coordinates coincides with the last $k$ coordinates of $P_1$.  Since
all $\vec{u}_i$ are possibly different, they may be supported by one
single individual each. Therefore, the referential constraint is lifted
if and only if the quota relative to $P_2$ is sufficiently small,
i.e., $q_{P_2} = 1$.
\end{proof}

As an immediate application of Prop.~\ref{prop3}, intersection and union
rules are included in the results above, since they are quota
rules. As for the distance-based rule or the average voter rule, we only remark that they lift
all integrity constraint by their definition, provided that the
minimisation is restricted to consistent databases.

For simpler integrity constraints, notably conjunctions of literals,
we show a simple correspondence theorem in the spirit of
\cite{GrandiEndrissAIJ2013}, albeit adapted to the first-order language under consideration.
First, we introduce a set $Con \subseteq \U$ of constants in
the first-order language, interpreted as themselves in each $D_i$,
that is, $\sigma(c) = c$ for every $c \in Con$.  Then, let
$\lit^+ \subseteq \L_{\D}$ be some language containing only positive
literals of form $P(c_1, \ldots, c_q)$, let 
$\lit^- \subseteq \L_{\D}$ be the set containing only {\em negative}
literals $\neg P(c_1, \ldots, c_q)$, and $\lit = lit^+ \cup lit^-$.  We can prove the following:

\begin{theorem}
If an aggregator $F$ satisfies $U$ and $G$, then it is collectively
rational w.r.t.~$\lit$.  If $Con = \U$,
then every aggregator that is collectively rational w.r.t.~$\lit$ is
also unanimous and grounded.
\end{theorem}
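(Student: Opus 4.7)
The plan is to prove the two directions separately, exploiting the tight semantic correspondence between positive ground literals and tuple membership, and between negative ground literals and tuple non-membership. The crucial feature of $\lit$ is that every formula involves only constants $c \in Con$, which by stipulation satisfy $\sigma(c) = c$ in every instance; hence satisfaction of a ground literal in a database $D$ is equivalent to a simple membership (resp.~non-membership) condition on the relevant tuple in $D(P)$.

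For the forward direction, I would fix $\phi \in \lit$ with $D_i \models \phi$ for all $i \in \N$, and split on cases. If $\phi = P(c_1,\ldots,c_q)$ is a positive literal, then $\vec{c} = (c_1,\ldots,c_q) \in D_i(P)$ for every $i$, so $\vec{c} \in \bigcap_{i \in \N} D_i(P)$; unanimity $U$ then yields $\vec{c} \in F(\vec{D})(P)$, i.e.\ $F(\vec{D}) \models \phi$. If instead $\phi = \neg P(c_1,\ldots,c_q)$ is a negative literal, then $\vec{c} \notin D_i(P)$ for every $i$, so $\vec{c} \notin \bigcup_{i \in \N} D_i(P)$; groundedness $G$ then gives $\vec{c} \notin F(\vec{D})(P)$, i.e.\ $F(\vec{D}) \models \phi$. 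So CR with respect to both sublanguages follows immediately.

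For the converse, assume $Con = \U$ and that $F$ lifts $\lit$. To derive $U$, I would take any $\vec{u} \in \bigcap_{i \in \N} D_i(P)$. Since $Con = \U$ each component $u_k$ is (interpreted as) a constant, so the ground atom $P(u_1,\ldots,u_q)$ belongs to $\lit^+$ and is true in every $D_i$. Collective rationality then forces $F(\vec{D}) \models P(u_1,\ldots,u_q)$, i.e.\ $\vec{u} \in F(\vec{D})(P)$, establishing $U$. For $G$, I would reason by contraposition: if $\vec{u} \notin \bigcup_{i \in \N} D_i(P)$, then every $D_i$ satisfies the negative literal $\neg P(u_1,\ldots,u_q) \in \lit^-$, so by CR so does $F(\vec{D})$, whence $\vec{u} \notin F(\vec{D})(P)$.

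There is no real technical obstacle here; the argument is essentially a bookkeeping exercise once the semantics of ground literals is unpacked. The only point I would underline is the role of the assumption $Con = \U$ in the converse: without it, there could be tuples $\vec{u}$ containing elements of $\U$ that are not nameable by constants in $\L_{\D}$, in which case no literal in $\lit$ can witness membership or non-membership of $\vec{u}$, so neither $U$ nor $G$ can be extracted from CR with respect to $\lit$. This dependence on naming is what makes the second half a genuine converse rather than a triviality.
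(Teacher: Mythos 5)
Your proposal is correct and follows essentially the same route as the paper's proof: positive ground literals are handled by unanimity, negative ones by groundedness, and the converse uses $Con = \U$ to turn arbitrary tuples into ground literals witnessing membership or non-membership. Your closing remark on why $Con = \U$ is needed is a nice addition but does not change the argument.
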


\begin{proof}
Suppose that aggregator $F$ satisfies $U$ and $G$. Then, we show that it
is collectively rational w.r.t.~$\lit$.
In particular, if all instances $D_1, \ldots, D_n$ satisfy formulas
$P(c_1, \ldots, c_q)$ in $lit^+$, then $\vec{c} \in D_i(P)$ for every
$i \in \N$.  By unanimity we have that $\bigcap_{i \in \N}
D_i(P) \subseteq F(\vec{D})(P)$, and therefore $\vec{c} \in
F(\vec{D})(P)$.  Hence, $F$ is collectively rational on $\lit^+$.  A
similar reasoning holds for $\lit^-$ by using groundedness, and
therefore for the whole $\lit$.

As for the converse, suppose that $Con = \U$ and $F$ is collectively
rational wrt to $\lit$. Then, choose a profile $D_1, \ldots, D_n$ with
$\vec{u} \in \bigcap_{i \in \N} D_i(P)$, that is, for every
$i \in \N$, $D_i \models P(u_1, \ldots, u_q)$. Since we assumed that
$Con=\U$, all formulas $P(u_1, \ldots, u_q)$ are in $lit^+$. Further,
$F$ is CR on $D_1, \ldots, D_n$ and therefore $F(\vec{D}) \models
P(u_1, \ldots, u_q)$, that is, $\vec{u} \in F(\vec{D})(P)$, which
means that $F$ is unanimous.  Similarly, and under the same
assumption, any $F$ that is collectively rational w.r.t.~$\lit^-$ is
grounded.
\end{proof}

Note that, differently from the propositional case \cite[Theorem
10]{GrandiEndrissAIJ2013}, here we need both axioms of unanimity and
groundedness to preserve both positive and negative literals, while
for propositional languages unanimity suffices.

Given the results above, a natural question is to identify the class of aggregators that can lift any integrity constraint, no matter its form. Let us first define the following class:

\begin{definition}[Generalised dictatorship]
 An aggregation procedure $F : \D(\U)^n \to \D(\U)$ is a {\em
   generalised dictatorship} if there exists a map $g: \D(\U)^n \to \N$
   such that for every $\vec{D} \in \D(\U)^n$, $F(\vec{D}) =
   D_{g(\vec{D})}$. Let $GDIC$ be the class of generalised
   dictatorships.
\end{definition}

Generalised dictatorships include classical dictatorships, but also
more interesting procedures such as the average voter rule from Section~\ref{aggregators}, or any other rule which selects the individual input that best summarises a
given profile.  Clearly, since each single instance satisfies the
given set of constraints, a generalised dictatorship is
collectively rational with respect to the full first-order language.
 \begin{theorem} 
$GDIC \subset CR[\L_{\D}]$ 
\end{theorem}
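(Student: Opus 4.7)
The statement factors into containment and strictness, which I would handle separately. The containment $GDIC \subseteq CR[\L_\D]$ is the easy direction, essentially flagged by the remark preceding the theorem: for any $F \in GDIC$ with witnessing choice map $g : \D(\U)^n \to \N$ and any sentence $\phi \in \L_\D$ satisfied by every $D_i$ in a profile $\vec{D}$, the output $F(\vec{D}) = D_{g(\vec{D})}$ inherits $\phi$ from the selected agent, so $F \in CR[\L_\D]$.

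For the strict part, my plan is to exhibit a concrete $F \in CR[\L_\D] \setminus GDIC$ built by applying a fixed domain renaming to one agent's database. Fix a non-identity permutation $\rho : \U \to \U$, for instance the transposition of two chosen elements $a, b \in \U$, and define $F_\rho(\vec{D}) := \rho(D_1)$, where $\rho$ acts componentwise on the tuples interpreting each relation. That $F_\rho$ is not a generalised dictatorship is witnessed by any profile of the form $(D_1, \ldots, D_1)$ with $D_1 = \{P(a)\}$ for some unary $P \in \D$ and $\rho(a) \neq a$: the output $\{P(\rho(a))\}$ does not coincide with $D_1$, the only instance occurring in the profile, so no choice function $g$ can turn $F_\rho$ into a generalised dictatorship.

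It remains to argue that $F_\rho$ is CR w.r.t.~every $\phi \in \L_\D$. The heart of the argument, and the one step I expect to require some care, is a permutation-invariance lemma: for every $\D$-instance $D$ and every $\phi \in \L_\D$, one has $D \models \phi$ iff $\rho(D) \models \phi$. This is a routine induction on the structure of $\phi$, but it crucially relies on two features of the framework. First, $\L_\D$ has no constants (Def.~\ref{def:fo}); every term is a variable, so no sentence can pin down a specific element of $\U$. Second, Def.~\ref{def:fo-sem} uses active-domain semantics, so quantifiers range over $\adom(D)$, which $\rho$ transports bijectively onto $\adom(\rho(D))$. Granted the lemma, if every $D_i \models \phi$ then in particular $D_1 \models \phi$, hence $F_\rho(\vec{D}) = \rho(D_1) \models \phi$, as required. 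Were the language extended with constants $Con \subseteq \U$ (as in the theorem preceding this one), the same plan would still work by choosing $\rho$ to fix $Con$ pointwise, which is possible whenever $|\U \setminus Con| \geq 2$.
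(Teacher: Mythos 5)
Your proposal is correct and follows essentially the same route as the paper: the containment is the immediate observation that a generalised dictatorship returns one of the (constraint-satisfying) inputs, and strictness is witnessed by the permuted dictatorship $F(\vec{D})=\rho(D_1)$ for a non-identity $\rho$. If anything, your write-up is slightly more careful than the paper's, which merely asserts ``$D_1 \neq \rho(D_1)$'' --- you supply an explicit profile $(D_1,\ldots,D_1)$ with $D_1=\{P(a)\}$ and $\rho(a)\neq a$ so that the output matches no instance in the profile, and you make the isomorphism-invariance lemma (and its reliance on a constant-free language with active-domain semantics) explicit.
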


Observe that while for binary aggregation the theorem above is an
equality \cite[Theorem~16]{GrandiEndrissAIJ2013}, this is not the
case for database aggregation. This is due to the fact that the
first-order language specifies a given database instance only up to
isomorphism. The proof of this fact is rather immediate: consider a
dictatorship of the first agent, modified by permuting all the
elements in $\U$. That is, $F(\vec{D})=\rho(D_1)$ where
$\rho:\U\to \U$ is any permutation different from the identity.
Clearly, $D_1 \neq \rho(D_1)$
but all constraints that
were satisfied by $D_1$ are also satisfied by $\rho(D_1)$. Hence, this
aggregator is collectively rational with respect to the full
first-order language $\L_{\D}$, but is not a generalised dictatorship.


\section{Aggregation and Query Answering}\label{sec:queries}

In this section we analyse one of the most common operations performed
on databases, i.e., query answering, in the light of (rational)
aggregation.
Observe that any open formula $\phi(x_1,\dots,x_\ell)$, with free
variables $x_1,\dots,x_\ell$, can be thought of as a
query \cite{AbiteboulHV95}. Evaluating $\phi(x_1,\dots,x_\ell)$ on a
database instance $D$ returns the set $ans(D, \phi)$ of tuples
$\vec{u}=(u_1,\dots,u_\ell)$ such that the assignment $\sigma$, with
$\sigma(x_i) = u_i$ for $i \leq \ell$, satisfies $\phi$, that is,
$(D, \sigma) \models \phi$. Hereafter, with an abuse of notation, we
often write simply $(D, \vec{u}) \models \phi$.
Given the importance of query answering in database theory, the
following question is of obvious interest.
\begin{question}
What is the relation between the answer $ans(F(\vec{D}), \phi)$ to
query $\phi$ on the aggregated database $F(\vec{D})$, and answers
$ans(D_1, \phi), \allowbreak \ldots, ans(D_n, \phi)$ to the same query
on each instance $D_1, \ldots, D_n$?
\end{question}

Clearly, given a query $\phi$, every aggregator $F$ on database
instances induces an aggregation procedure $F^*$ on the query
answers,
as illustrated by the following diagram:
\begin{center}
\begin{tikzpicture}[auto, node distance=2cm, ->, >=stealth', shorten
>=1pt, semithick]

\tikzstyle{every place/.style}=[fill=white, text=black, minimum size=15pt]
\tikzstyle{every initial by arrow}=[initial text=]
\node 	(s0)   	{$D_1, \ldots, D_n$};
\node    (s1) 	[right of=s0, node distance=5cm]	{$F(\vec{D})$};
\node    (s2) 	[below of=s0]	{$ans(D_1, \phi), \ldots, ans(D_n, \phi)$};
\node    (s3) 	[below of=s1]	{$ans(F(\vec{D}),\phi)$};
\path 	(s0) 	edge node[above] {$F$} (s1);
\path 	(s0) 	 edge node[left] {$\phi$} (s2);
\path 	(s1) 	 edge node[right] {$\phi$} (s3);
\path 	(s2) 	 edge node[above] {$F^*$} (s3);
\end{tikzpicture}
\end{center}

Hereafter we consider some examples to illustrate this question.
\begin{example} \label{ex1}
If we assume intersection as the aggregation procedure, it is easy to
check that in general the answer to a query in the aggregated database
is not the intersection of the answers for each single instance. To
see this, let $D_1(P) = \{ (a,b) \}$ and $D_2(P) = \{ (a,d) \}$ and
consider query $\phi = \exists y P(x,y)$. Clearly, $ans(D_1 \cap
D_2, \phi)$ is empty, while $ans(D_1, \phi) \cap ans(D_2, \phi)
= \{a\}$.  Hence, in general $\bigcap_{i \in \N}
ans(D_i, \phi) \not \subseteq ans(\bigcap_{i \in \N} D_i, \phi)$.
The converse can also be the case. Consider instances $D_1$,
$D_2$ such that $D_1(P) = \{(a,a), (a,b)\}$, $D_1(R) = \{ c \}$, and
$D_2(P) = \{(a,a), (a,b)\}$, $D_2(R) = \{ d \}$, 
with query $\phi = \forall y P(x,y)$. The intersection
$ans(D_1, \phi) \cap ans(D_2, \phi)$ of answers is empty.
However the answer w.r.t.~the intersection of databases is
$ans(D_1 \cap D_2, \phi) = \{ a \}$, since the active domain of the
intersection only includes elements $a$ and $b$.  As a result, in
general $ans(\bigcap_{i \in \N} D_i, \phi) \not \subseteq
\bigcap_{i \in \N} ans(D_i, \phi)$.

A similar argument shows that the union of answers is in
general different from the answer on the union of instances.
\end{example}


These examples shows that it is extremely difficult to find
aggregators such that the diagram above commutes for any first-order
query $\phi \in \L_{\D}$. Hence, they naturally raise the question of
syntactic restrictions on queries such that the aggregation procedure
$F^* = \phi \circ F \circ \phi^{-1}$ on answers can be expressed
explicitly in terms of $F$ (e.g., the intersection of answers is the
answer to the query on the intersection)\footnote{Hereafter, with an abuse of notation, we write $\phi$ also to mean the corresponding query evaluation for formula $\phi$.}:
\begin{question} \label{quest1}
Given aggregation procedures $F$ and $F^*$, is there a restriction on
the query language for $\phi$ such that the diagram above commutes?
\end{question}

This problem is related to the following, more general question.
\begin{question} \label{quest2}
  Given an aggregation procedure $F$ and a query 
  language $\L$, what is the aggregation procedure $F^*$?  Can $F^*$
be represented ``explicitly'', for instance as one of the aggregation procedure introduced in Sec.~\ref{aggregators}?
\end{question}

For instance, it is immediate that if $F$ and $F^*$ are both
dictatorships for the same agent $i^* \in \N$, then the whole
first-order language $\L$ is preserved, that is, the result of
querying and then aggregating by $F^*$ is the same as aggregating by $F$ and then querying.



The results in this section provide a first, partial answer to
Question~\ref{quest1}.
%
Hereafter, with an abuse of notation, we consider functor $F^*$ as an
aggregator on databases. Indeed, all $ans(D_i, \phi)$ can be seen as a
finite relational structures, to which we can apply the aggregators in
Sec.~\ref{aggregators}, as well as the axioms in Sec.~\ref{sec:axioms}.

Let us first introduce the positive existential and universal fragments
$\L^+_{\exists}$ and $\L^+_{\forall}$ of first-order logic, defined
respectively as follows:
\begin{eqnarray*}
\phi  & ::= &  x = x' \mid P(x_1, \ldots, x_q) \mid 
\phi \lor \phi \mid \exists x \phi\\
\phi  & ::= &  x = x' \mid P(x_1, \ldots, x_q) \mid 
\phi \land \phi \mid \forall x \phi
\end{eqnarray*}

Our first lemma shows a positive result related to Example~\ref{ex1}, when union is considered instead of intersection.
\begin{lemma}[Existential Fragment] \label{existential}
The language $\L^+_{\exists}$ is preserved by unions, that is, for $F$ and $F^*$ equal to
set-theoretical union, the diagram commutes for the query language
$\L^+_{\exists}$.
\end{lemma}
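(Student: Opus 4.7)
The plan is a structural induction on $\phi \in \L^+_\exists$, proving in one go that $ans(\bigcup_{i \in \N} D_i, \phi) = \bigcup_{i \in \N} ans(D_i, \phi)$. First I would dispatch the two atomic base cases. For an atomic predicate $P(x_1,\dots,x_q)$, by the definition of union of database instances we have $\left(\bigcup_{i \in \N} D_i\right)(P) = \bigcup_{i \in \N} D_i(P)$, so the two query-answer sets coincide. For an equality $x = x'$, the active-domain semantics yields $ans(D, x=x') = \{(u,u) \mid u \in \adom(D)\}$, and since $\adom(\bigcup_i D_i) = \bigcup_i \adom(D_i)$ the two sides again agree.

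For the disjunction step I would use that $ans(D, \phi_1 \lor \phi_2) = ans(D, \phi_1) \cup ans(D, \phi_2)$ in every $D$; combining this with the inductive hypotheses for $\phi_1$ and $\phi_2$ and the associativity of union yields the claim. The main case is $\phi = \exists x\, \psi(x, \vec y)$, where by active-domain semantics $\vec u \in ans(D, \phi)$ iff some $u \in \adom(D)$ satisfies $(u, \vec u) \in ans(D, \psi)$. For the inclusion $\supseteq$: if $\vec u \in ans(D_k, \phi)$ with witness $u \in \adom(D_k)$, the IH lifts $(u, \vec u) \in ans(D_k, \psi)$ to $ans(\bigcup_i D_i, \psi)$, and since $u \in \adom(D_k) \subseteq \adom(\bigcup_i D_i)$ we conclude $\vec u \in ans(\bigcup_i D_i, \phi)$. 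For the inclusion $\subseteq$: from a witness $u \in \adom(\bigcup_i D_i)$ with $(u, \vec u) \in ans(\bigcup_i D_i, \psi)$, the IH produces an index $k$ with $(u, \vec u) \in ans(D_k, \psi)$; since query answers on an instance live inside the active domain of that instance, $u \in \adom(D_k)$, hence $\vec u \in ans(D_k, \phi) \subseteq \bigcup_i ans(D_i, \phi)$.

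The only delicate point is the $\subseteq$ direction of the existential case: a priori the witness $u$ drawn from $\adom(\bigcup_i D_i)$ might come from some $D_{k'}$ while the body $\psi$ is satisfied only ``across'' the union, and one needs to collapse both facts to the same single instance $D_k$. The active-domain convention on $ans$ combined with the IH applied to $\psi$ forces exactly this collapse, which is the key step. It is worth noting that the fragment $\L^+_\exists$ deliberately omits conjunction: unions do not commute with $\land$ (take e.g.\ $D_1(P)=\{a\}$, $D_1(Q)=\{b\}$, $D_2(P)=\{b\}$, $D_2(Q)=\{a\}$ against $P(x) \land Q(x)$), and Example~\ref{ex1} shows the same for negation, so the restriction to the four constructors $\{=, P, \lor, \exists\}$ is essentially tight.
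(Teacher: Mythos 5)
Your proof is correct and follows essentially the same route as the paper's: a structural induction over $\L^+_{\exists}$ in which the atomic and disjunction cases are routine and the crux is the $\subseteq$ direction of the existential case, resolved exactly as in the paper by observing that positivity forces the witness into the active domain of the single instance $D_k$ supplied by the induction hypothesis. The only differences are cosmetic additions on your part --- an explicit treatment of the equality atom (which the paper's proof silently skips) and the counterexample showing why conjunction must be excluded --- neither of which changes the argument.
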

\begin{proof}
The proof is by induction on the structure of query $\phi$.
For atomic $\phi = P(x_1, \ldots, x_q)$, we have that $\vec{u} \in
ans(\bigcup_{i \in \N} D_i,\phi)$ iff $(\bigcup_{i \in \N}
D_i, \vec{u}) \models \phi$, iff for some $i \in \N$, $(D_i,
\vec{u}) \models \phi$, iff $\vec{u} \in ans(D_i,\phi)$  for some $i \in \N$,
iff $\vec{u} \in \bigcup_{i \in \N} ans(D_i,\phi)$.


For $\phi = \psi \lor \psi'$, $\vec{u} \in ans(\bigcup_{i \in \N}
  D_i,\phi)$ iff $(\bigcup_{i \in \N} D_i, \vec{u}) \models \phi$, iff
  $(\bigcup_{i \in \N} D_i, \vec{u}_1) \models \psi$ or
  $(\bigcup_{i \in \N} D_i, \vec{u}_2) \models \psi'$, iff for some
  $i, j \in \N$, $(D_i, \vec{u}_1) \models \psi$ or
  $(D_j, \vec{u}_2) \models \psi'$ by induction hypothesis, where
  $\vec{u}_1$ and $\vec{u}_2$ are suitable subsequences
  of $\vec{u}$. In particular, we have both
  $(D_i, \vec{u}) \models \psi \lor \psi'$ and
  $(D_j, \vec{u}) \models \psi \lor \psi'$, that is,
  $\vec{u} \in \bigcup_{i \in \N} ans(D_i,\phi)$.  On the other hand,
  $\vec{u}
\in \bigcup_{i \in \N} ans(D_i,\phi)$ iff $\vec{u} \in ans(D_i,\phi)$
for some $i \in \N$, iff $(D_i, \vec{u}_1) \models \psi$ or $(D_i,
\vec{u}_2) \models \psi'$. In both cases, by induction hypothesis $(\bigcup_{i \in \N} D_i,
\vec{u}) \models \phi$, that is, $\vec{u} \in ans(\bigcup_{i \in \N}
  D_i,\phi)$.

For $\phi = \exists x \psi$, $\vec{u} \in ans(\bigcup_{i \in \N}
D_i,\phi)$ iff $(\bigcup_{i \in \N} D_i, \vec{u}) \models \phi$, iff
for some $v \in adom(\bigcup_{i \in \N} D_i)$, $(\bigcup_{i \in \N}
D_i, \allowbreak \vec{u} \cdot v) \models \psi$, and therefore for some $i,
j \in \N$, $v \in adom(D_j)$ and $(D_i, \vec{u} \cdot
v) \models \psi$.  Note that if $(D_i, \vec{u} \cdot
v) \models \psi$, then $v \in adom(D_i)$ as well, as $\phi$ belongs to
the positive (existential) fragment of first-order logic.
Hence, for some $i \in \N$, $v \in adom(D_i)$ and $(D_i, \vec{u} \cdot
v) \models \psi$,  that is,
$\vec{u} \in ans(D_i,\phi)$ for some $i \in \N$. 
On the other hand, $\vec{u} \in \bigcup_{i \in \N} ans(D_i,\phi)$ iff
$\vec{u} \in ans(D_i,\phi)$ for some $i \in \N$, iff $v \in adom(D_i)$
and $(D_i, \vec{u} \cdot v) \models \psi$, that is, $v \in
adom(\bigcup_{i \in \N} D_i)$ and $(\bigcup_{i \in \N}
D_i, \vec{u} \cdot v) \models \psi$ by induction hypothesis. Hence,
$\vec{u} \in ans(\bigcup_{i \in \N} D_i,\phi)$.
\end{proof}

By Lemma~\ref{existential} queries in $\L^+_{\exists}$ are preserved
whenever both $F$ and $F^*$ are unions. The interest of such a result
is that, in order to get an answer to query $\phi \in \L^+_{\exists}$
in the aggregated databases $F(\vec{D})$, we might run the query on
each instance $D$ separately, and then aggregate the results,
whichever is more efficient depending on the size of query $\phi$ and
instances $D_1, \ldots, D_n$.

Further, we may wonder whether a result symmetric to
 Lemma~\ref{existential} holds for intersections and the positive
 universal fragment $\L^+_{\forall}$ of first-order logic.
%
Unfortunately, in Example~\ref{ex1} we provided a formula $\phi
= \forall y P(x, y)$ in $\L^+_{\forall}$ and instances $D_1$, $D_2$
such that $ans(D_1 \cap D_2, \phi) \not \subseteq ans(D_1, \phi) \cap
ans(D_2, \phi)$. Hence, for $F$ and $F^*$ equal to set-theoretical
intersection, the diagram above does not commute for the query
language $\L^+_{\forall}$.

Nonetheless, we are able to prove a weaker but still significant
result related to Question~\ref{quest2}. Specifically, the next lemma
shows that
if in the diagram above $F$ is unanimous and the query language is
$\L^+_{\forall}$, then $F^*$ is unanimous, in the sense that
$\bigcap_{i \in \N} ans(D_i, \phi) \subseteq ans(\bigcap_{i \in \N}
D_i),\phi)$.
\begin{lemma} \label{universal}
Let aggregator $F$ be unanimous
and let $\L^+_{\forall}$ be the query language. Then, the induced
aggregator $F^*$ is also unanimous.
\end{lemma}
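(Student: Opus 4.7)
The plan is to establish the inclusion $\bigcap_{i \in \N} ans(D_i,\phi) \subseteq ans(F(\vec{D}),\phi)$ by structural induction on $\phi \in \L^+_{\forall}$, mirroring the template used in Lemma~\ref{existential} but dualising the roles of $\exists/\vee$ and $\forall/\wedge$. Fix $\vec{u} \in \bigcap_{i \in \N} ans(D_i,\phi)$, so that $(D_i,\vec{u}) \models \phi$ for every $i \in \N$; I aim to conclude $(F(\vec{D}),\vec{u}) \models \phi$, invoking unanimity of $F$ in the form $\bigcap_{i} D_i(P) \subseteq F(\vec{D})(P)$ for every $P \in \D$.

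For the base case $\phi = P(x_1,\dots,x_q)$, the hypothesis gives $\vec{u} \in D_i(P)$ for every $i$, hence $\vec{u} \in \bigcap_i D_i(P)$, and unanimity of $F$ places $\vec{u}$ directly in $F(\vec{D})(P)$, yielding $\vec{u} \in ans(F(\vec{D}),\phi)$. Equalities $x=x'$ are handled identically since their truth is insensitive to the structure. For $\phi = \psi \wedge \psi'$, I split $\vec{u}$ into the subsequences $\vec{u}_1,\vec{u}_2$ addressing the free variables of $\psi$ and $\psi'$; from $(D_i,\vec{u}) \models \psi \wedge \psi'$ I read off $(D_i,\vec{u}_1) \models \psi$ and $(D_i,\vec{u}_2) \models \psi'$ for every $i$, so the induction hypothesis applied separately to $\psi$ and $\psi'$ transports both facts to $F(\vec{D})$, whence $(F(\vec{D}),\vec{u}) \models \psi \wedge \psi'$.

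The main obstacle is the universal case $\phi = \forall x\,\psi$. The hypothesis unpacks to: for each $i$ and each $v \in adom(D_i)$, $(D_i,\vec{u}\cdot v) \models \psi$. The target is: for each $v \in adom(F(\vec{D}))$, $(F(\vec{D}),\vec{u}\cdot v) \models \psi$. The difficulty is that active-domain semantics ties quantification to the domain of each structure, and $adom(F(\vec{D}))$ may generally contain elements that do not lie in $adom(D_i)$ for some $i$, so the induction hypothesis on $\psi$ cannot be triggered uniformly in $i$. The way to close the gap, which is exactly what justifies the displayed formulation $\bigcap_{i}ans(D_i,\phi) \subseteq ans(\bigcap_i D_i,\phi)$ preceding the lemma, is to read $F$ in its canonical unanimous incarnation $F(\vec{D})(P) = \bigcap_i D_i(P)$: any $v \in adom(F(\vec{D}))$ then occurs in some tuple $\vec{w} \in \bigcap_i D_i(Q)$, so $\vec{w} \in D_i(Q)$ for every $i$, forcing $v \in \bigcap_i adom(D_i)$.

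With $v \in \bigcap_i adom(D_i)$ secured, the premise gives $(D_i,\vec{u}\cdot v) \models \psi$ for every $i$, i.e.~$\vec{u}\cdot v \in \bigcap_i ans(D_i,\psi)$, and the induction hypothesis on $\psi$ yields $(F(\vec{D}),\vec{u}\cdot v) \models \psi$. Since $v$ was arbitrary in $adom(F(\vec{D}))$, we conclude $(F(\vec{D}),\vec{u}) \models \forall x\,\psi$, closing the induction. The overall structure of the plan is therefore: exploit unanimity of $F$ at the atomic level, rely on the purely propositional induction step for conjunctions, and in the universal step use the unanimity-based containment of active domains to re-enter the inductive hypothesis---this last step is the delicate part and the reason why the dual of Lemma~\ref{existential} only gives a one-sided inclusion rather than an equality.
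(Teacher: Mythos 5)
Your proposal follows the same route as the paper's proof: structural induction on $\phi$, with unanimity invoked at the atomic level, a routine conjunction step, and the universal step reduced to the claim that every $v \in \adom(F(\vec{D}))$ already lies in each $\adom(D_i)$ so that the inductive hypothesis can be applied. You are in fact more careful than the paper at the one delicate point: the paper's proof simply asserts ``in particular, for all $v \in \adom(F(\vec{D}))$, $(D_i, \vec{u} \cdot v) \models \psi$ for every $i \in \N$'', silently assuming the containment $\adom(F(\vec{D})) \subseteq \adom(D_i)$, whereas you explicitly isolate this as the obstacle and supply a justification.

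The catch is that your justification works only by reading $F$ as the intersection rule, $F(\vec{D})(P) = \bigcap_{i \in \N} D_i(P)$, so what you actually prove is narrower than the lemma, which is phrased for an \emph{arbitrary} unanimous $F$. For a general unanimous $F$ the containment you need can genuinely fail, since unanimity is only a lower bound on $F(\vec{D})(P)$ and $F$ may add tuples that introduce fresh active-domain elements. Concretely, take $F$ to be the union rule (which is unanimous), $\phi = \forall x \forall y\, P(x,y)$, $D_1(P) = \{(a,a)\}$ and $D_2(P) = \{(b,b)\}$: both instances satisfy $\phi$ under active-domain semantics, so the empty tuple lies in $\bigcap_i ans(D_i,\phi)$, yet $F(\vec{D})(P) = \{(a,a),(b,b)\}$ has active domain $\{a,b\}$ and omits $(a,b)$, so $ans(F(\vec{D}),\phi) = \emptyset$ and the claimed inclusion breaks. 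So your instinct that an extra hypothesis is required is correct --- the gap sits in the lemma and in the paper's own proof, not in your reasoning --- but as written your argument establishes the result for intersection (more generally, for any unanimous $F$ with $\adom(F(\vec{D})) \subseteq \bigcap_{i \in \N} \adom(D_i)$), and you should state that restriction rather than present it as a proof of the lemma in full generality.
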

\begin{proof}
  We prove that $\bigcap_{i \in \N} ans(D_i, \phi) \subseteq
  ans(F(\vec{D}),\phi)$. So, if $\vec{u} \in \bigcap_{i \in \N}
  ans(D_i, \phi)$ then for every agent $i \in \N$,
  $(D_i, \vec{u}) \models \phi$.  We now prove by induction on
  $\phi \in \L^+_{\forall}$ that if for every $i \in \N$,
  $(D_i, \vec{u}) \models \phi$, then
  $(F(\vec{D}), \vec{u}) \models \phi$, that is $\vec{u} \in
  ans(F(\vec{D}),\phi)$.  

As to the base case for $\phi =
  P(x_1, \ldots, x_q)$ atomic, $(D_i, \vec{u}) \models \phi$ iff
  $\vec{u} \in D_i(P)$ for every $i \in \N$. In particular,
  $\vec{u} \in F(\vec{D})(P)$ as well by unanimity, and therefore
  $(F(\vec{D}), \vec{u}) \models P(x_1, \ldots, x_q)$.  The case for
  identity is immediate. 

As to the inductive case for $\phi
  = \psi \land \psi'$, suppose that for every $i \in \N$,
  $(D_i, \vec{u}) \models \phi$, that is,
  $(D_i, \vec{u}_1) \models \psi$ and $(D_i, \vec{u}_2) \models \psi$
  for suitable $ \vec{u}_1$ and $\vec{u}_2$ subsequences of
  $ \vec{u}$. By induction hypothesis we obtain that
  $(F(\vec{D}), \vec{u}_1) \models \psi$ and
  $(F(\vec{D}), \vec{u}_2) \models \psi'$, i.e.,
  $(F(\vec{D}), \vec{u}) \models \phi$.  Finally, if
  $(D_i, \vec{u}) \models \forall x \psi$ for every $i \in \N$, then
  for all $v \in \adom(D_i)$, $(D_i, \vec{u} \cdot
  v) \models \psi$. In particular, for all $v \in \adom(F(\vec{D}))$,
  $(D_i, \vec{u} \cdot v) \models \psi$ for every $i \in \N$, and by
  induction hypothesis, for all $v \in \adom(F(\vec{D}))$,
  $(F(\vec{D}), \vec{u} \cdot v) \models \psi$, i.e.,
  $(F(\vec{D}), \vec{u}) \models \forall x \psi$.
\end{proof}

Note that set-theoretical intersection is unanimous. 

A result symmetric to Lemma~\ref{universal} holds for language
 $\L^+_{\exists}$ and unions:
\begin{lemma} \label{existential2}
Let aggregator $F$ be grounded and let $\L^+_{\exists}$ be the query
language. Then, the induced aggregator $F^*$ is also grounded.
\end{lemma}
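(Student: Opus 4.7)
The plan is to mirror the proof of Lemma~\ref{universal} structurally, but dualising from unanimity/universal to groundedness/existential. Concretely, I would prove by induction on $\phi \in \L^+_{\exists}$ the following statement: if $(F(\vec{D}), \vec{u}) \models \phi$ then there exists $i \in \N$ such that $(D_i, \vec{u}) \models \phi$. Once this is established for every $\phi$ and every appropriate $\vec{u}$, the conclusion $ans(F(\vec{D}), \phi) \subseteq \bigcup_{i \in \N} ans(D_i, \phi)$ is immediate, which is exactly groundedness of $F^*$.

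For the base case $\phi = P(x_1, \ldots, x_q)$, if $(F(\vec{D}), \vec{u}) \models \phi$ then $\vec{u} \in F(\vec{D})(P)$, and groundedness of $F$ yields $\vec{u} \in D_i(P)$ for some $i \in \N$, giving $(D_i, \vec{u}) \models \phi$. For the equality atom $x = x'$, satisfaction depends only on the assignment, so the conclusion is trivial with any $i$. For $\phi = \psi \lor \psi'$, if $(F(\vec{D}), \vec{u}) \models \phi$ then without loss of generality $(F(\vec{D}), \vec{u}_1) \models \psi$ for a suitable subsequence $\vec{u}_1$ of $\vec{u}$, and the induction hypothesis directly delivers some $i$ with $(D_i, \vec{u}_1) \models \psi$, hence $(D_i, \vec{u}) \models \phi$.

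The existential step is the only delicate one, and is where I expect the main obstacle. Suppose $(F(\vec{D}), \vec{u}) \models \exists x\, \psi$. Active-domain semantics yields some $v \in \adom(F(\vec{D}))$ with $(F(\vec{D}), \vec{u} \cdot v) \models \psi$; the induction hypothesis then produces an $i \in \N$ with $(D_i, \vec{u} \cdot v) \models \psi$. To conclude $(D_i, \vec{u}) \models \exists x\, \psi$, however, I must show $v \in \adom(D_i)$, and here groundedness of $F$ alone is not sufficient: it only guarantees $v \in \adom(D_j)$ for some possibly different $j$. The trick is exactly the observation used in the proof of Lemma~\ref{existential}: since $\psi$ lies in the positive existential fragment, any atomic subformula witnessing the satisfaction of $\psi$ in $D_i$ forces $v$ to occur in some tuple of some relation in $D_i$, so $v \in \adom(D_i)$ automatically.

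Putting these cases together closes the induction and proves the lemma. The only genuinely new ingredient compared with the proof of Lemma~\ref{universal} is the use of groundedness in place of unanimity at the atomic level, together with the positivity argument carried over from Lemma~\ref{existential} to relocate the existential witness inside $\adom(D_i)$ rather than inside $\adom(F(\vec{D}))$.
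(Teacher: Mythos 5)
Your proof is correct and follows essentially the same route as the paper's own argument: an induction on $\phi \in \L^+_{\exists}$ showing that satisfaction in $F(\vec{D})$ implies satisfaction in some $D_i$, using groundedness only at the atomic case and invoking positivity of $\psi$ to relocate the existential witness into $\adom(D_i)$. You have correctly identified the one delicate step (the witness for $\exists$), and your resolution is exactly the one the paper uses.
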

\begin{proof}
  We prove that $ans(F(\vec{D}), \phi)  \subseteq
  \bigcup_{i \in \N} ans(D_i, \phi)$. So, if
  $\vec{u} \in ans(F(\vec{D}),\phi)$ then 
$(F(\vec{D}), \vec{u}) \models \phi$.  We now prove by induction on
  $\phi \in \L^+_{\exists}$ that if
  $(F(\vec{D}), \vec{u}) \models \phi$, then for some $i \in \N$,
  $(D_i, \vec{u}) \models \phi$, and therefore
  $\vec{u} \in \bigcup_{i \in \N} ans(D_i, \phi)$.  

As to the base
  case for $\phi = P(x_1, \ldots, x_q)$ atomic,
  $(F(\vec{D}), \vec{u}) \models \phi$ iff $\vec{u} \in
  F(\vec{D})(P)$, iff $\vec{u} \in D_i(P)$ for some agents $i \in \N$
  by groundedness. In particular, $(D_i, \vec{u}) \models
  \phi$ as well.  The case for
  identity is immediate.
  
As to the inductive case for $\phi = \psi \lor \psi'$, suppose that
$(F(\vec{D}), \vec{u}) \models \phi$, i.e.,
$(F(\vec{D}), \vec{u}_1) \models \psi$ or
$(F(\vec{D}), \vec{u}_2) \models \psi'$ for suitable $ \vec{u}_1$ and
$\vec{u}_2$ subsequences of $ \vec{u}$. In the former case, by
induction hypothesis we have that
for some $i \in \N$, $(D_i, \vec{u}_1) \models \psi$, and
  therefore $(D_i, \vec{u}) \models \phi$.
  The case for
  $(F(\vec{D}), \vec{u}_2) \models \psi'$ is symmetric.
  Finally, if
  $(F(\vec{D}), \vec{u}) \models \exists x \psi$, then
  for some $v \in \adom(F(\vec{D}))$, $(F(\vec{D}), \vec{u} \cdot
  v) \models \psi$. In particular, 
by induction hypothesis,
$( D_i, \vec{u} \cdot v) \models \psi$ for
some $i \in \N$. 
Since $\psi$ is a positive formula,
$v \in \adom(D_i)$, and therefore, $(D_i, \vec{u} \cdot
v) \models \phi$.
\end{proof}

Note that Lemma~\ref{universal} and \ref{existential2} apply to all
quota rules, including union and intersection, though the query
language is rather limited.

We now move towards more practical query answering and consider the
language $\L_{CQ}$ of (unions of) conjuctive queries, which is a
popular query language in the theory of databases thanks to its
NP-complete query answering problem \cite{Chandra77}.
Formulas in $\L_{CQ}$ are defined according to the following BNF:
\begin{eqnarray*}
\phi  & ::= &  P_1(x_1, \ldots, x_{q_1}) \land \ldots \land P_m(x_1, \ldots, x_{q_m}) \mid 
\phi \lor \phi \mid \exists x \phi
\end{eqnarray*}

We now show that the result of conjunctive queries is preserved by
merge with incomplete information.
\begin{lemma} \label{lemma:merge}
Let aggregator $F$ be merge with incomplete information and let
$\L_{CQ}$ be the query language. Then, the induced aggregator $F^*$ is
also the merge rule.
\end{lemma}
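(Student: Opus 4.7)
The proof plan is a structural induction on $\phi \in \L_{CQ}$, mirroring the three clauses of the grammar: conjunctions of atoms, disjunctions, and existential quantifications. At each inductive step I would establish $ans(F(\vec{D}), \phi) = F^*(ans(D_1, \phi), \ldots, ans(D_n, \phi))$ by tracing witnessing assignments in both directions and checking that position-wise agreement (respectively disagreement) of the witnesses yields the correct placement of ordinary values (respectively of $\bot$) in the resulting answer tuples.

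For an atomic conjunction $\phi = P_1(\vec{x}_1) \land \ldots \land P_m(\vec{x}_m)$, the case of a single atom collapses to the defining equation $F(\vec{D})(P_j) = F^*(D_1(P_j),\ldots,D_n(P_j))$. For a proper conjunction, a tuple $\vec{u} \in ans(F(\vec{D}),\phi)$ is justified by an assignment $\sigma$ whose restrictions lie in each $F(\vec{D})(P_j)$. Expanding clause (i) of the merge rule once per $P_j$ yields, for every agent $i$, candidate tuples $\vec{u}^{\,i,j} \in D_i(P_j)$ that agree on the coordinates shared by the atoms. I would assemble these per-agent witnesses into an assignment $\sigma_i$ satisfying $\phi$ on $D_i$, and conversely decompose any family of witnessing assignments on the right-hand side into a single assignment on $F(\vec{D})$. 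A free-variable position of $\vec{u}$ should then carry $\bot$ exactly when the agents' values at that position disagree across the witnessing family, matching the definition of $F^*$ applied to the answer tables. The disjunction case $\phi = \psi \lor \psi'$ reduces, using the induction hypothesis, to verifying that merge commutes with set-theoretic union up to clause (ii); the existential case $\phi = \exists x \psi$ reduces to showing that merge commutes with projection, which is immediate for clause (i) since position-wise merging is insensitive to dropping coordinates, and then one argues that projection preserves the subsumption order used by (ii).

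The main obstacle will be controlling clause (ii) uniformly across the three cases: a tuple produced from (i) on one side of the equation may be discarded as subsumed on the other side but not vice versa, since joins and projections can both introduce and suppress $\bot$ symbols. My plan is to first prove the equality of the \emph{tentative} answer tables obtained from (i) alone, and only then argue that the subsumption filter (ii) removes the same tuples from each. For this last step I would establish an auxiliary lemma stating that the position-wise subsumption preorder on tuples is preserved by join with shared variables and by projection, which should render the maximality pruning coherent on both sides of the diagram and complete the induction.
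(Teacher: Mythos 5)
Your proposal follows essentially the same route as the paper's proof: a structural induction on the three clauses of the $\L_{CQ}$ grammar, unfolding the merge rule position-wise in the atomic-conjunction base case and pushing the equality through disjunction and existential quantification via the induction hypothesis. The one place you go beyond the paper is your explicit treatment of the subsumption filter (clause (ii)) via an auxiliary preservation lemma for joins and projections — the paper's proof silently elides this issue, so your plan is, if anything, the more complete version of the same argument.
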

\begin{proof}
  We show $F^*(ans(D_1, \phi), \ldots, ans(D_n, \phi))
  = \allowbreak ans(F(\vec{D}), \phi)$, where both $F$ and $F^*$ are
  the merge rule, by induction on $\phi \in \L_{CQ}$.
As to the base case for $\phi = P_1(x_1, \ldots,
  x_{q_1}) \land \ldots \land P_m(x_1, \ldots, x_{q_m})$,
  $(F(\vec{D}), \vec{u}) \models \phi$ iff $\vec{u}_1 \in
  F(\vec{D})(P_1)$, \ldots, $\vec{u}_m \in F(\vec{D})(P_m)$, where
  each $\vec{u}_i$ is a suitable subsequences of $\vec{u}$.  
  This is
  the case iff for every $j \leq n$, $\vec{u}'_{j,1} \in
  D_j(P_1)$, \ldots, $\vec{u}'_{j,m} \in D_j(P_m)$, where each
  $\vec{u}'_{j,i}$ differs from $\vec{u}_{i}$ as the latter might
  contain $\bot$ in designated positions. Again, the above is the case
  iff for every $j \leq n$, $(D_j, \vec{u}'_j) \models \phi$, that is,
  $\vec{u} \in F^*(\{\vec{u}'_{1}\}, \ldots, \{\vec{u}'_{n}\})$, where $F^*$ is the
  merge rule.
  For $\phi = \psi \lor \psi'$, $(F(\vec{D}), \vec{u}) \models \phi$
  iff $(F(\vec{D}), \vec{u}_1) \models \psi$ or
  $(F(\vec{D}), \vec{u}_2) \models \psi'$, where $\vec{u}_1$ and
  $\vec{u}_2$ are suitable subsequences of $\vec{u}$. By induction
  hypothesis, this is the case iff $\vec{u}_1 \in
  F^*(ans(D_1, \psi), \ldots, ans(D_n, \psi))$ or $\vec{u}_2 \in
  F^*(ans(D_1, \psi'), \ldots, ans(D_n, \psi'))$, that is, iff
  $\vec{u} \in F^*(ans(D_1, \phi), \allowbreak \ldots, ans(D_n, \phi))$.
Finally, for $\phi = \exists x \psi$, $\vec{u} \in ans(F(\vec{D}),\phi)$ iff
for some $v \in adom(F(\vec{D}))$, $(F(\vec{D}), \vec{u} \cdot
v) \models \psi$, iff $\vec{u} \cdot v \in
F^*(ans(D_1, \psi), \ldots, \allowbreak ans(D_n, \psi))$ by induction
hypothesis.  By definition of $F^*$, for every $j \leq n$,
$(D_j, \vec{u}' \cdot v') \models \psi$, where $\vec{u}' \cdot v'$
differs from $\vec{u} \cdot v$ as the latter might contain $\bot$ in
designated positions. The above is the case iff for every $j \leq n$,
$(D_j, \vec{u}') \models \phi$, iff $\vec{u} \in
F^*(ans(D_1, \phi), \ldots, ans(D_n, \phi))$, where $F^*$ is the merge
rule.
\end{proof}

By Lemma~\ref{lemma:merge} we can query the individual instances and then
merge the corresponding answers instead of querying the merged
database.
%

%

%
By using the relation-wise average voter rule, we are able to prove
the following preservation result.  Hereafter, the average $Ave$ of
answers $ans(D_1, \phi), \ldots, ans(D_n, \phi)$ is computed as follows:
{\small
\begin{eqnarray} 
Ave(ans(D_1, \phi), \ldots, ans(D_n, \phi)) \!  \!  \!   \!   \!  & =
\!  \!  \!   \!   \! \!  \!  \!   \!   \! & \argmin_{ans(D_i, \phi) \mid i\in\N\}} \sum_{j \in \N}
(|ans(D_j, \phi) \setminus ans(D_i, \phi)| + |ans(D_i, \phi) \setminus
ans(D_j, \phi)|) \ \ \ \ \ \   \label{ag11}
\end{eqnarray}
}

Note that the relation-wise average voter rule can be defined as the union of the
averages of the individual relations associated to each $P\in \mathcal D$. 

%
\begin{lemma} \label{average}
Let aggregator $F$ be the average rule and let first-order logic $\L$
be the query language. Then, $F^*(ans(D_1, \phi), \ldots,
ans(D_n, \phi)) = ans(F(\vec{D}), \phi)$ is a subset of 
$Ave(ans(D_1, \phi), \ldots, \allowbreak ans(D_n, \phi))$.
\end{lemma}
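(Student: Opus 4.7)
The plan is to exploit the explicit structure of the relation-wise average rule together with an induction on the query $\phi$. By the definition of $F$, for every relation symbol $P\in\D$ we have $F(\vec{D})(P) = D_{i_P}(P)$ for some index $i_P \in \N$ achieving the argmin of the relation-wise symmetric-distance sum on $P$. Thus $F(\vec{D})$ is pieced together from minimising fragments of the input profile, and in general need not coincide with any single $D_i$.

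First I would treat the atomic case $\phi = P(\vec{x})$. Here $ans(F(\vec{D}),\phi) = F(\vec{D})(P) = D_{i_P}(P) = ans(D_{i_P},\phi)$. Since the symmetric distances appearing in the definition of $F$ on $P$ and those appearing in the definition of $Ave$ on the answer sets $\{ans(D_i,\phi)\}_{i\in\N}$ are literally the same numerical quantities, the index $i_P$ is simultaneously an argmin at the database level and at the answer-set level. Consequently, $ans(F(\vec{D}),\phi)$ is contained in the corresponding minimiser and hence in $Ave(ans(D_1,\phi),\ldots,ans(D_n,\phi))$.

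Next I would proceed by structural induction on $\phi \in \L$, handling equality, negation, implication and the universal quantifier (disjunction and existentials being handled dually). For each $\vec{u} \in ans(F(\vec{D}),\phi)$, the strategy is to trace the witnesses for the relational atoms occurring in $\phi$ back to the specific $D_{i_P}$'s that supplied the interpretations, and to use those witnesses to exhibit some $D_j$ such that $ans(D_j,\phi) \in Ave(\vec{a})$ and $\vec{u} \in ans(D_j,\phi)$. The quantifier step requires the active-domain semantics of Definition~\ref{def:fo-sem}, since variables range over $\adom(F(\vec{D})) \subseteq \bigcup_i \adom(D_i)$, and the witness supplied by $F(\vec{D})$ has to be matched to a witness available in one of the argmin individual instances.

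The main obstacle, and the reason the lemma only claims a subset inclusion rather than equality, is that first-order query evaluation need not preserve the symmetric-distance landscape: a small distance between two databases can blow up or collapse to zero under a complex query, so the argmin indices at the database level can diverge from those at the answer-set level. The delicate point is therefore the quantifier case, where a witness drawn from $\adom(F(\vec{D}))$ may come from a different $D_i$ than the one that supplies the interpretation of a relation symbol inside the quantifier scope; matching the answer tuple on $F(\vec{D})$ to an answer tuple on a single $D_j$ in the argmin requires careful bookkeeping, and it is precisely this mismatch that blocks the reverse inclusion and forces the asymmetric form of the statement.
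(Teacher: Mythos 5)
Your base case is sound and in fact sharper than the paper's: you correctly observe that for atomic $\phi = P(\vec{x})$ the symmetric distances between the sets $ans(D_i,\phi) = D_i(P)$ are literally the summands in the relation-wise argmin defining $F$, so the minimising index transfers from the database level to the answer-set level. The paper's own base case says essentially the same thing, if more tersely.

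The gap is in the inductive step, which you describe as a plan but never carry out — and your own closing remarks show why the plan as stated cannot work. Your strategy is to trace the witnesses for $\vec{u} \in ans(F(\vec{D}),\phi)$ back to some individual $D_j$ and then conclude both that $\vec{u} \in ans(D_j,\phi)$ and that $ans(D_j,\phi)$ achieves the argmin in the definition of $Ave$. The first half is plausible bookkeeping; the second half is exactly the claim you then concede may fail, when you note that ``the argmin indices at the database level can diverge from those at the answer-set level.'' That divergence does not merely block the reverse inclusion: it is the very fact your forward argument needs to rule out, since membership in $Ave(ans(D_1,\phi),\ldots,ans(D_n,\phi))$ means membership in a \emph{distance-minimising} answer set, not just in some answer set containing the right witnesses. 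The paper's proof avoids this by running the induction on a different invariant — that the tuple $\vec{u}$ itself ``belongs to the average'' of the answer sets — propagating it through Boolean connectives via the sub-tuples $\vec{u}_1,\vec{u}_2$ of the subformulas, and then disposing of quantifiers by unfolding them as finite conjunctions and disjunctions over the active domain (which is legitimate under the active-domain semantics of Definition~\ref{def:fo-sem} and the finiteness of instances). You would need either to adopt that invariant and prove it is preserved by $\lnot$, $\to$, and the quantifier unfolding, or to supply an argument that the database-level argmin index does survive query evaluation for compound $\phi$; as written, neither is done, so the induction does not go through beyond the atomic case.
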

\begin{proof}
If $\vec{u} \in F^*(ans(D_1, \phi), \ldots, ans(D_n, \phi)) =
  ans(F(\vec{D}), \phi)$ then $(F(\vec{D}), \vec{u}) \models \phi$.
  Now we prove on induction of the structure of $\phi$ that if
  $(F(\vec{D}), \vec{u}) \models \phi$ then $\vec{u}$ belongs to the average of
  $ans(D_1, \phi), \ldots, \allowbreak ans(D_n, \phi)$.
For $\phi = P(x_1, \ldots, x_{q})$, if
$(F(\vec{D}), \vec{u}) \models \phi$ then $\vec{u} \in
  F(\vec{D})(P)$, and therefore $\vec{u}$ belongs to the
average of $ans(D_1, \phi), \ldots, ans(D_n, \phi)$, as $F(\vec{D})$
minimises the distance for all $P \in \D$.
  For $\phi = \psi \star \psi'$, where $\star$ is a Boolean operator,
  $(F(\vec{D}), \vec{u}) \models \phi$ iff
  $(F(\vec{D}), \vec{u}_1) \models \psi \ \hat{\star} \
  (F(\vec{D}), \vec{u}_2) \models \psi'$, where $\hat{\star}$ is the
  interpretation of $\star$ and $\vec{u}_1$, $\vec{u}_2$ are suitable
  subsequences of $\vec{u}$. By induction hypothesis,
then $\vec{u}_1$ and $\vec{u}_2$ minimise the distances in the answers
to queries $\psi$ and $\psi'$ respectively, then $\vec{u}$ does so for
$\phi$, and therefore $\vec{u}$ belongs to the average of
$ans(D_1, \phi), \ldots, ans(D_n, \phi)$.
%
Finally, universal (resp.~existential) quantification is dealt with by
considering it as a finite conjunction (resp.~disjunction).
\end{proof}

To conclude this section we discuss the results obtain so far, which
can be seen as a first contribution on the relationship between
database aggragation and query answering.  In particular,
Lemma~\ref{existential} can be seen as a (partial) answer to
Question~\ref{quest1}. Similarly, Lemma~\ref{universal}
and \ref{existential2} are related to Question~\ref{quest2}. However,
all the applicability of these results is restricted by the limited
expressivity of the query languages.  On the other hand,
Lemma~\ref{lemma:merge} and \ref{average} show that merge with
incomplete information and the average voter rule preserve (union of)
conjuctive queries and the whole of first-order logic respectively.
Results along these lines
may find application in efficient query answering: it might be that in
selected cases, rather than querying the aggregated database
$F(\vec{D})$, it is more efficient to query the individual instances
$D_1, \ldots, D_n$ and then aggregate the answers. Then, it is crucial
to know which answers are preserved by the different aggregation
procedures. The results provided in this section aimed to be a first,
preliminary step in this direction.

%





\section{Conclusions and Related Work}\label{sec:conclusions}

In this paper we have proposed a framework for the aggregation of
conflicting information coming from multiple sources in the form of
finite relational databases.  We proposed a number of aggregators
inspired by the literature on social choice theory, and adapted a
number of axiomatic properties.  We then focused on two natural
questions which arise when dealing with the aggregation of
databases. First, in Section~\ref{sec:collectiverationality} we studied
what kind of integrity constraints are lifted by some of the rules we
proposed, i.e., what constraints are true in the aggregated database
supposing that all individual input satisfies the same constraints.
Second, in Section~\ref{sec:queries} we investigated first-order query
answering in the aggregated databases, characterising some languages
for which the aggregation of the answers in the individual databases
correspond to the answer to the query on the aggregated database.

Our initial results shed light on the possible use of choice-theoretic
techniques in database merging and integration, and opens multiple
interesting directions for future research.  In particular, the
connections to the literature on aggregation and merging can be
investigated further. Firstly,
Section~\ref{sec:collectiverationality} showcased results for which
database aggregation behaves similarly to binary aggregation with
integrity constraints (see \cite{GrandiEndrissAIJ2013}), but pointed
at some crucial differences.  In particular, there are natural
classes of integrity constraints used in databases for which the
equivalent in propositional logic, the language of choice for binary
aggregation, would be tedious and lenghty. We were able to provide
initial results on their preservation through aggregation.  Secondly,
the recent work of \cite{EndrissGrandiAIJ2017} is also strongly
related to our contribution. Since graphs are a specific type of
relational structures, our work directly generalise their graph
aggregation framework to relations of arbitrary arity. However, the
specificity of their setting allows them to obtain very powerful
impossibility results, which are yet to be explored in the area of
database aggregation.  Thirdly, to the best of our knowledge the
problem of aggregated query answering is new in the literature on
aggregation, albeit a similar problem has been studied in the
aggregation of argumentation graphs \cite{ChenEndrissTARK2017}.
Also this direction deserves further investigation.

\medskip\noindent\textbf{Acknowledgements.}  F.~Belardinelli
acknowledges the support of the ANR JCJC Project SVeDaS
(ANR-16-CE40-0021), and U.~Grandi the support  of the ANR JCJC project SCONE (ANR 18-CE23-0009-01).

\bibliographystyle{eptcs}
\bibliography{databases}
\end{document}